\newtheorem*{theorem*}{Theorem}
\def\Tr{\operatorname{Tr}}
\def\>{\rangle}
\def\<{\langle}
\def\N#1{\left|\!\left|{#1}\right|\!\right|}
\def\mE{\mathcal{E}}
\def\openone{\mathds{1}}
\newcommand{\set}[1]{\mathcal{#1}}
\newcommand{\op}[1]{\mathsf{#1}}
\newcommand{\defeq}{\triangleq}
\newcommand{\cT}{\Gamma}
\newcommand{\psucc}{P_{\operatorname{succ}}}
\newcommand{\supa}{\operatorname{sup}_{\alpha}}
\newcommand{\infa}{\operatorname{inf}_{\alpha}}
\renewcommand{\qedsymbol}{\nobreak \ifvmode \relax \else
	\ifdim \lastskip<1.5em \hskip-\lastskip \hskip1.5em plus0em
	minus0.5em \fi \nobreak \vrule height0.75em width0.5em
	depth0.25em\fi}
\renewcommand{\ge}{\geqslant}
\renewcommand{\le}{\leqslant}
\renewcommand{\geq}{\geqslant}
\renewcommand{\leq}{\leqslant}
\newtheorem{theorem}{Theorem}
\newtheorem{corollary}{Corollary}
\newtheorem{lemma}{Lemma}
\newtheorem{definition}{Definition}
\newtheorem*{corollary*}{Corollary}
\newtheorem*{lemma*}{Lemma}
\newtheorem*{proposition*}{Proposition}
\newtheorem*{postulate*}{Postulate}
\newtheorem*{definition*}{Definition}
\theoremstyle{remark}
\newtheorem{remark}{Remark}
\newtheorem*{remark*}{Remark}
\theoremstyle{definition}
\newcommand{\bea}{\begin{eqnarray}}
\newcommand{\eea}{\end{eqnarray}}
\newcommand{\be}{\begin{equation}}
\newcommand{\ee}{\end{equation}}
\def\be{\begin{equation}}
\def\ee{\end{equation}}
\newcommand{\ba}{\begin{equation}\begin{aligned}}
\newcommand{\ea}{\end{aligned}\end{equation}}
\newcommand{\mH}{\mathcal{H}}
\newcommand{\la}{\langle}
\newcommand{\ra}{\rangle}
\newcommand{\tr}{{\rm Tr}}
\newcommand{\mbb}[1]{\mathbb{#1}}
\begin{document}

\title{Quantum Relative Lorenz Curves}
\author{Francesco \surname{Buscemi}}
\affiliation{Department of Computer Science and Mathematical Informatics, Nagoya University, Chikusa-ku, Nagoya,
	464-8601, Japan}
\email{buscemi@is.nagoya-u.ac.jp}
\author{Gilad \surname{Gour}}
\affiliation{Institute for Quantum Science and Technology and Department of Mathematics and Statistics,
	University of Calgary, 2500 University Drive NW, Calgary, Alberta, Canada T2N 1N4}
\email{gour@ucalgary.ca}

\begin{abstract}
The theory of majorization and its variants, including thermomajorization, have been found to play a central role in the formulation of many physical resource theories, ranging from entanglement theory to quantum thermodynamics. Here we formulate the framework of quantum relative Lorenz curves, and show how it is able to unify majorization, thermomajorization, and their noncommutative analogues. In doing so, we define the family of Hilbert $\alpha$-divergences and show how it relates with other divergences used in quantum information theory. We then apply these tools to the problem of deciding the existence of a suitable transformation from an initial pair of quantum states to a final one, focusing in particular on applications to the resource theory of athermality, a precursor of quantum thermodynamics.
\end{abstract}

\maketitle

%%%%%%%%%%%%%%%%%%%%%%%%%%%%%%%%%%%%%%%%%%%%%%%%%%%%%%%%%%%%%%%%%%%%%%

\section{Introduction}

Lorenz curves, originally introduced to give a quantitative and pictorially clear representation of the inequality of the wealth distribution in a country~\cite{lorenz_methods_1905}, have since then been used also in other contexts in order to effectively compare different distributions (see for example~\cite{Marshall2011} and references therein). In their typical formulation, Lorenz curves fully capture the notion of ``nonuniformity''~\cite{Gour20151} of a distribution, in the sense that comparing the Lorenz curves associated to two given distributions (say, $p$ and $q$) induces an ordering equivalent to the relation of \textit{majorization}, which, in turns, is well-known to be equivalent to the existence of a random permutation (i.e., a bistochastic channel) transforming $p$ into $q$~\cite{Marshall2011}.

More recently, some variants of the original definition were proposed in order to capture other aspects of a given distribution, besides its mixedness. In particular, \textit{thermomajorization} was introduced in~\cite{horodecki_fundamental_2013} to characterize state transitions under thermal operations or Gibbs preserving operations~\cite{brandao_resource_2013}. Here, the corresponding Lorenz curve characterizes a partial ordering relative to the Gibbs distribution, rather than the uniform one.

This suggests that Lorenz curves are best understood not as properties of one given distribution, but rather of a given \textit{pair} of distributions, one being the ``state'' at hand and the other being the ``reference''. For example, the original Lorenz curve contains information about a given distribution $p$ with respect to the uniform one: it is in this precise sense, then, that the Lorenz curve characterizes the degree of nonuniformity of $p$---exactly because the reference distribution is chosen to be the uniform one. In the same way, thermomajorization measures the  degree of ``athermality'' because, in this case, the reference distribution is chosen to be the thermal (Gibbs) distribution.

A lot of attention has been devoted recently to the generalization of the above ideas to the case in which, rather than comparing  distributions, one wants to compare \textit{quantum states}, namely, density operators defined on a Hilbert space. This is one of the topics lying at the core of theories like quantum thermodynamics and, more generally, quantum resource theories~\cite{Hor13,Spe16,Bra15}.
However, a general theory of quantum Lorenz curves would be interesting in its own right, providing new insights on the rich analogies existing between quantum theory and classical probability theory, despite their differences. 

In this paper we develop such a theory by introducing the notions of quantum testing region, quantum relative Lorenz curves, and quantum relative majorization in much analogy with their classical counterparts. We find equivalent conditions for quantum relative majorization in terms of a new family of divergences that we call Hilbert $\alpha$-divergences, with $\alpha\in(1,\infty)$, and show that in the limits $\alpha\to1$ and $\alpha\to\infty$ the Hilbert $\alpha$-divergences are equivalent to the trace-distance and the max-relative entropy, respectively. As an application to quantum thermodynamics, we show that only the min- and max-relative entropies are needed to determine whether it is possible to convert one qubit athermality resource to another by Gibbs preserving operations. Finally, we show that in higher dimensions, quantum relative Lorenz curves can be used to determine the existence of a test-and-prepare channel converting one pair of states to another. 

\section{Quantum relative Lorenz curves}

Consider the task of distinguishing which, among two possible distributions, is the one that originated a set of observed sample data. This scenario, central in statistics, is usually treated within the framework known as \textit{hypothesis testing}~\cite{neyman_problem_1933}: the two distributions are called the \textit{null hypothesis} and the \textit{alternative hypothesis}, respectively, and the task of the statistician is to minimize the so-called \textit{type II error} (i.e., the probability of wrongly accepting the null hypothesis, namely, the probability of false negatives) given that the \textit{type I error} (i.e., the probability of wrongly rejecting the null hypothesis, namely, the probability of false positives) falls below a certain threshold. The whole hypothesis testing problem is hence ``encoded'' in the shape of the region of the $xy$-plane containing all achievable points $(x,y)=(\text{type I},\text{type II})$. Such a region is, by construction, convex, always contains the points $(0,0)$ and $(1,1)$, and is symmetric, in the sense that $(x,y)$ belongs to the region if and only if $(1-x,1-y)$ does, as this corresponds to exchanging the roles of null and alternative hypotheses (see Fig.~\ref{fig:region-class}). Hence, the hypothesis testing is fully characterized by the upper boundary of the region. In particular, as noticed by Renes~\cite{renes_relative_2015}, when testing $p$ against the uniform distribution, such boundary coincides with the usual Lorenz curve; when testing $p$ against the Gibbs distribution, it coincides with the thermomajorization curve.

The observations in~\cite{renes_relative_2015} exhibit a fundamental connection between the theory of (thermo)majorization and hypothesis testing. It is then extremely natural for us here to introduce the definition of Lorenz curves for pair of quantum states, leveraging on the fact that hypothesis testing is well understood in the quantum case too~\cite{Aud07,Nus09,Hay07,Hia91,Oga00}:
\begin{definition}
	Given two density matrices $\rho_1$ and $\rho_2$ on $\mathbb{C}^n$, the associated \emph{testing region} $\set{T}(\rho_1,\rho_2)\subset\mathbb{R}^2$ is defined as the set of achievable points
	\begin{equation*}
		(x,y)=(\Tr[E\rho_2],\Tr[E\rho_1])\;,
	\end{equation*}
	with $0\le E\le\openone_n$. The \emph{quantum Lorenz curve} of $\rho_1$ relative to $\rho_2$ is defined as the upper boundary of $\set{T}(\rho_1,\rho_2)$, see Fig.~\ref{fig:region-quantum}.
\end{definition}

\begin{figure}[tp]
\includegraphics[width=7cm]{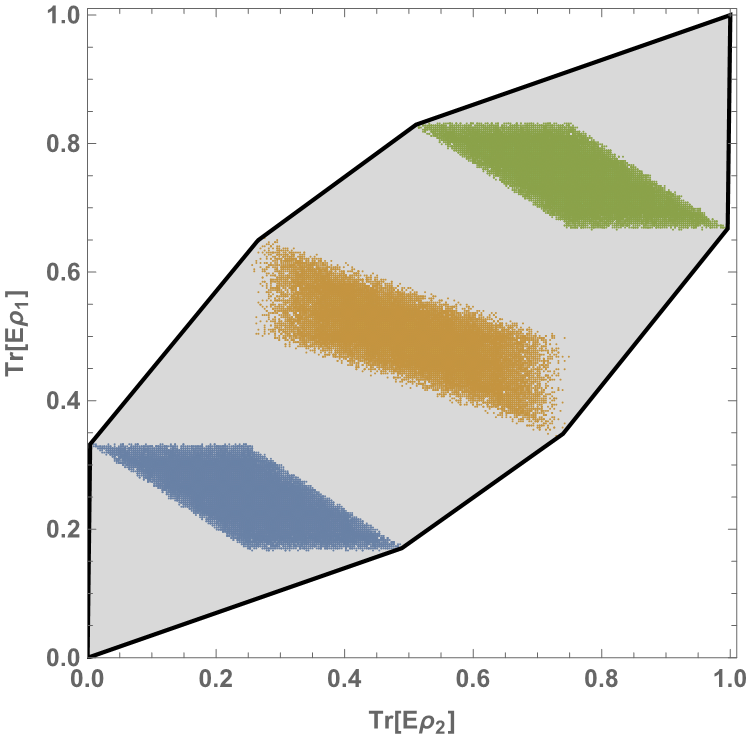}
\caption{Example of classical testing region in dimension $n=4$ with $\vec{p}_1=(1/2, 1/4, 1/4, 0)^T$ and $\vec{p}_2=(1/6, 1/6, 1/3, 1/3)^T$. The Lorenz curve (i.e. the upper boundary) is determined by the vertices at which the Lorenz curve changes slope.}
\label{fig:region-class}
\end{figure}

Closely related to the testing region, the \textit{hypothesis testing relative entropy} (see e.g.~\cite{Wang12,Dup13,Datta14} and references therein) is defined, for $0\le\epsilon\le1$, as follows:
\begin{align}
D^\epsilon_{H}(\rho_1\| \rho_2) &\defeq-\log Q^\epsilon(\rho_1\|\rho_2)\nonumber\\
Q^\epsilon(\rho_1\|\rho_2) &\defeq\min_{\substack{0\leq E\leq\openone_n\\ \tr[\rho_1 E]\geq 1-\epsilon }}\tr[\rho_2 E].\label{eq:Qeps}
\end{align}
As noted in Ref.~\cite{Dup13}, the computation of $Q^\epsilon(\rho_1\|\rho_2)$ can be solved efficiently by semidefinite linear programming (SDP). In fact, in what follows (see Eqs.~(\ref{eq:spoiler1}) and~(\ref{eq:spoiler2} in Section~\ref{sec:reform}) we show that, using the strong duality relation of SDP, it is possible to write $Q^\epsilon(\rho_1\|\rho_2)$, for any fixed $\epsilon$, as the maximum of a simple function of one real variable, namely, $Q^\epsilon(\rho_1\|\rho_2)=\max_{r\ge0}f_\epsilon(r)$, where
\ba\label{eq:how-to-qlc}
f_\epsilon(r) & \defeq(1-\epsilon)r-\tr(r\rho_1-\rho_2)_+\\
& =\frac{1}{2}\Big[1+(1-2\epsilon)r-\N{r\rho_1-\rho_2}_1\Big]\;.
\ea
This observation will play an important role in what follows, by considerably simplifying our analysis.
	
The above definition of relative Lorenz curve generalizes the classical Lorenz curve to the quantum case. In particular, if $\rho_1$ and $\rho_2$ commute, they can be simultaneously diagonalized, and the testing region in this case becomes the collection of points 
\[\set{T}_{\rm cl}(\vec{p_1},\vec{p_2})\defeq\{(\vec{t}\cdot\vec{p}_1,\vec{t}\cdot\vec{p}_2)\;:\;\vec{t}\in\mathbb{R}^{n}_{+}\;,\vec{t}\leq (1,1,...,1)^T\},\]
where $\vec{p}_1$ and $\vec{p}_2$ are the diagonals of $\rho_1$ and $\rho_2$ written in a vector form. In this case, Blackwell proved a very strong relation~\cite{blackwell_equivalent_1953,torgersen_comparison_1991}: given two pairs of distributions $(\vec{p}_1,\vec{p}_2)$ and $(\vec{q}_{1},\vec{q}_{2})$, the inclusion $
\set{T}_{\rm cl}(\vec{q}_{1},\vec{q}_{2})\subseteq\set{T}_{\rm cl}(\vec{p}_1,\vec{p}_2)$ holds if and only if there exists a column stochastic matrix $M$ such that $\vec{q}_{1}=M\vec{p}_1$ and $\vec{q}_{2}=M\vec{p}_2$. Known results about classical (thermo) majorization are therefore special cases of Blackwell's theorem, even though Blackwell's work actually predates some of them (see the discussion in Refs.~\cite{buscemi-gibbs,renes_relative_2015}).

In the rest of the paper we explore the extent to which statements similar to Blackwell's theorem can be proved in the quantum case. However, our interest here does not lie as much in the general case, for which we know that many classical results cease to hold~\cite{shmaya_comparison_2005,buscemi_comparison_2012,matsumoto_quantum_2010,jencova_comparison_2012,matsumoto_example_2014,buscemi-gibbs}, but rather in restricted scenarios of practical relevance, especially for the growing field of quantum resource theories.

\begin{figure}[tp]
	\includegraphics[width=7cm]{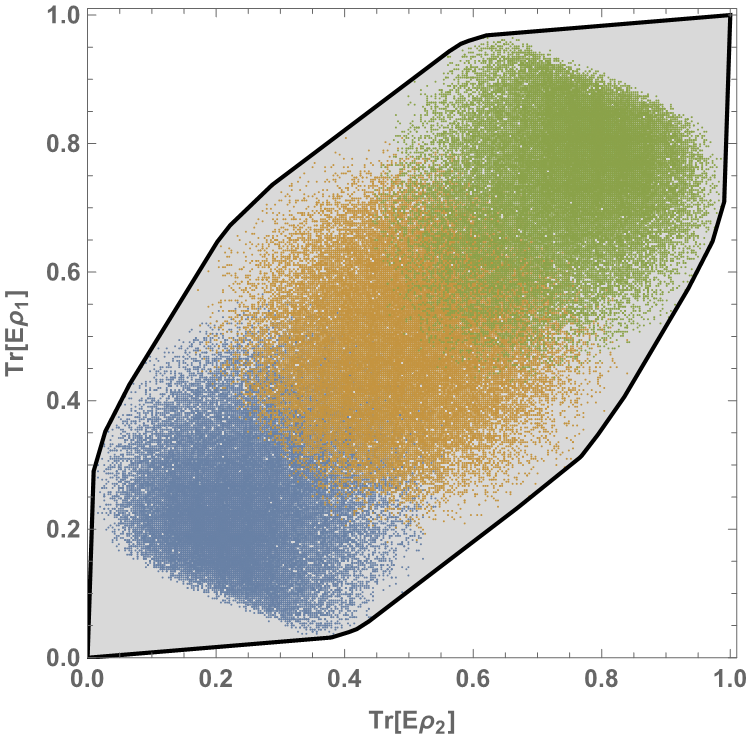}
	\caption{Numerical example of the quantum testing region for two random four-dimensional density matrices. Notice that the curve is only roughly approximated as the sampled measurements are not enough to determine it neatly. Quantum Lorenz curves are efficiently obtained by semi-definite linear programming, e.g., using Eq.~(\ref{eq:how-to-qlc}) in the main text.}
	\label{fig:region-quantum}
\end{figure}

\section{Hilbert $\alpha$-divergences}

In analogy with the notation used for majorization, we write
\begin{equation*}
	(\rho_1,\rho_2)\succ(\rho_1',\rho_2')
\end{equation*}
and say that $(\rho_1,\rho_2)$ \textit{relatively majorizes} $(\rho'_1,\rho'_2)$, whenever the quantum Lorenz curve of $\rho_1$ relative to $\rho_2$ lies everywhere above the quantum Lorenz curve of $\rho_1'$ relative to $\rho_2'$, that is, $\set{T}(\rho_1,\rho_2)\supseteq\set{T}(\rho'_1,\rho'_2)$.

As a tool to characterize quantum relative majorization, we introduce here a family of divergences as follows: given two density matrices $\rho$ and $\sigma$ on $\mathbb{C}^n$, we define, for all $\alpha\ge1$, the following quantity:
\begin{equation}\label{eq:sup-alpha}
	\operatorname{sup}_\alpha(\rho/\sigma)\defeq\sup_{\alpha^{-1}\openone_n\le E\le\openone_n}\frac{\Tr[E\rho]}{\Tr[E\sigma]},
\end{equation}
and the corresponding divergence:
\begin{equation}\label{eq:alpha-hil-div}
	H_\alpha(\rho\|\sigma)\defeq\frac{\alpha}{\alpha-1}\log_2\operatorname{sup}_\alpha(\rho/\sigma).
\end{equation}
The notation used in Eq.~(\ref{eq:sup-alpha}) is adapted from Refs.~\cite{bushell_hilberts_1973,eveson_hilberts_1995,reeb_hilberts_2011}: there the quantity
\begin{equation*}
	\begin{split}
		\sup(\rho/\sigma)&\defeq\inf\{\lambda:\lambda\sigma-\rho\ge0\}\\
		&=\lim_{\alpha\to\infty}\operatorname{sup}_\alpha(\rho/\sigma)
	\end{split}
\end{equation*}
is used to define the Hilbert projective metric
\begin{equation*}
	\mathfrak{h}(\rho,\sigma)\defeq\ln[\sup(\rho/\sigma)\sup(\sigma/\rho)]. 
\end{equation*}
We note that, in Ref.~\cite{reeb_hilberts_2011}, the quantity $\inf(\rho/\sigma)$ is also introduced, as $\sup\{\lambda:\rho-\lambda\sigma\ge 0 \} $: in our notation it coincides with $\inf_{0\le E\le\openone_n}\{\Tr[E\rho]/\Tr[E\sigma] \}=1/\sup(\sigma/\rho)$. Due to the relation with the Hilbert's metric, we refer to the divergences in Eq.~(\ref{eq:alpha-hil-div}) as \textit{Hilbert $\alpha$-divergences}. Their main properties are summarized in the following theorem:
\begin{theorem}\label{prop:1}
Let $\rho$ and $\sigma$ be two density matrices on $\mathbb{C}^n$. Then:
\begin{enumerate}[label=\roman*)]
	\item for all $\alpha\ge 1$, 
	$H_{\alpha}(\rho\|\sigma)\geq 0$,
	with equality if and only if $\rho=\sigma$;
	\item for all $\alpha\ge 1$, the data-processing inequality holds: for any (not necessarily completely) positive trace-preserving map $\Phi$,
	$H_{\alpha}(\Phi(\rho)\|\Phi(\sigma))\leq H_{\alpha}(\rho\|\sigma)$;
	\item $H_{\infty}(\rho\|\sigma)\defeq\lim_{\alpha\to\infty}H_\alpha(\rho\|\sigma)= D_{\max}(\rho\|\sigma)$, namely, the max-relative entropy of Ref.~\cite{datta_min-_2009};
	\item 
	$H_1(\rho\|\sigma)\defeq\lim_{\alpha\to 1}H_{\alpha}(\rho\|\sigma)=\frac{1}{2\ln(2)}\N{\rho-\sigma}_1$. 
	\end{enumerate}
\end{theorem}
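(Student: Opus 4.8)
The plan is to exploit the operator-interval constraint $\alpha^{-1}\openone_n \le E \le \openone_n$, which is what makes all four statements fall out cleanly. For the non-negativity in (i) I would note that $E = \openone_n$ is always feasible and yields $\Tr[E\rho]/\Tr[E\sigma] = 1$, so $\operatorname{sup}_\alpha(\rho/\sigma) \ge 1$ and hence $H_\alpha(\rho\|\sigma) \ge 0$ for every $\alpha > 1$ (the prefactor $\alpha/(\alpha-1)$ being positive); the boundary case $\alpha = 1$ then follows from the limit computed in (iv). For the equality condition, $\operatorname{sup}_\alpha(\rho/\sigma) = 1$ means $\Tr[E(\rho - \sigma)] \le 0$ for all feasible $E$. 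Since $\alpha > 1$, the point $E_0 = \tfrac{1}{2}(1 + \alpha^{-1})\openone_n$ lies strictly inside the interval, so $E_0 \pm \delta P$ stays feasible for small $\delta$ and any projector $P$; using $\Tr[E_0(\rho-\sigma)] = 0$, both signs force $\Tr[P(\rho-\sigma)] = 0$, whence $\rho = \sigma$. The converse is immediate.

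For (ii) the key observation is that the feasible set is preserved under the Heisenberg-picture adjoint. If $\Phi$ is positive and trace-preserving, then its adjoint $\Phi^*$, defined by $\Tr[\Phi^*(E)\tau] = \Tr[E\Phi(\tau)]$, is positive and unital, so $\alpha^{-1}\openone_n \le E \le \openone_n$ implies $\alpha^{-1}\openone_n \le \Phi^*(E) \le \openone_n$. Writing $\Tr[E\Phi(\rho)]/\Tr[E\Phi(\sigma)] = \Tr[\Phi^*(E)\rho]/\Tr[\Phi^*(E)\sigma]$ then exhibits every admissible ratio for the output pair as an admissible ratio for the input pair, giving $\operatorname{sup}_\alpha(\Phi(\rho)/\Phi(\sigma)) \le \operatorname{sup}_\alpha(\rho/\sigma)$ and hence the data-processing inequality. (The lower bound $\alpha^{-1}\openone_n$ keeps all denominators $\ge \alpha^{-1} > 0$, so the ratios are well defined throughout.) Statement (iii) I would read off directly from the relation $\lim_{\alpha\to\infty}\operatorname{sup}_\alpha(\rho/\sigma) = \sup(\rho/\sigma) = \inf\{\lambda : \lambda\sigma - \rho \ge 0\}$ recorded above: combined with $\lim_{\alpha\to\infty}\alpha/(\alpha-1) = 1$ and the continuity of $\log_2$, the product converges to $\log_2 \sup(\rho/\sigma) = D_{\max}(\rho\|\sigma)$, both sides being $+\infty$ when $\supp\rho\not\subseteq\supp\sigma$.

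The hard part is (iv), which needs a genuine first-order expansion of $\operatorname{sup}_\alpha(\rho/\sigma)$ as $\alpha\to 1^+$. Here I would substitute $E = \openone_n - F$, turning the constraint into $0 \le F \le \beta\openone_n$ with $\beta \defeq 1 - \alpha^{-1} = (\alpha-1)/\alpha$, and the objective into
\[
\frac{1 - \Tr[F\rho]}{1 - \Tr[F\sigma]} = 1 + \frac{\Tr[F(\sigma - \rho)]}{1 - \Tr[F\sigma]}.
\]
The numerator is maximized at $F = \beta P_+$, with $P_+$ the projector onto the positive part of $\sigma - \rho$, giving $\max_{0\le F\le\beta\openone_n}\Tr[F(\sigma-\rho)] = \beta\,\Tr[(\sigma-\rho)_+] = \tfrac{\beta}{2}\N{\rho-\sigma}_1$, where the trace-norm identity uses $\Tr[\sigma-\rho] = 0$. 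Bounding the denominator by $1 - \beta \le 1 - \Tr[F\sigma] \le 1$ then yields the two-sided estimate
\[
\frac{\beta}{2}\N{\rho-\sigma}_1 \le \operatorname{sup}_\alpha(\rho/\sigma) - 1 \le \frac{\beta}{2(1-\beta)}\N{\rho-\sigma}_1 ,
\]
so that $(\operatorname{sup}_\alpha(\rho/\sigma) - 1)/\beta \to \tfrac{1}{2}\N{\rho-\sigma}_1$ as $\beta \to 0$. Finally, since $\alpha/(\alpha-1) = \beta^{-1}$ exactly and $\log_2 x = (x-1)(1+o(1))/\ln 2$ near $x = 1$, I would conclude $H_\alpha(\rho\|\sigma) = \beta^{-1}\log_2\operatorname{sup}_\alpha(\rho/\sigma) \to \tfrac{1}{2\ln 2}\N{\rho-\sigma}_1$. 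The main obstacle is precisely this asymptotic: the definition produces the indeterminate form $\infty\cdot 0$, and the explicit sandwich above is what pins down the leading order and rules out higher-order contributions uniformly.
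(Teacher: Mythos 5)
Your proposal is correct and follows essentially the same route as the paper's proof: non-negativity from the feasibility of $E=\openone_n$, the equality case via an affine reparametrization of the operator interval onto $[0,\openone_n]$ (your perturbation $E_0\pm\delta P$ is equivalent to the paper's substitution $\Delta=\tfrac{\alpha}{\alpha-1}(E-\alpha^{-1}\openone)$), data processing via the positive unital adjoint, and the $\alpha\to1$ limit via a first-order expansion of $\operatorname{sup}_\alpha(\rho/\sigma)-1$ in $\beta=(\alpha-1)/\alpha$. Your explicit two-sided sandwich in (iv) is a slightly more careful rendering of the paper's $\delta_\epsilon=\epsilon\max_\Delta\Tr[\Delta(\rho-\sigma)]+O(\epsilon^2)$ step, but it is the same underlying computation.
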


\begin{remark}
$H_\alpha$ is thus a family of divergences connecting the trace-distance (when $\alpha\to1$) with $D_{\max}$ (when $\alpha\to\infty$). In passing by, we also notice that, while in point~(ii) above the data-processing inequality is stated to hold for any positive trace-preserving map, Hilbert $\alpha$-divergences are in fact monotonically decreasing for an even larger set of transformations, called \textit{2-statistical morphisms}: while this point is outside the scope of the present work, we refer the interested reader to Refs.~\cite{buscemi_comparison_2012,matsumoto_quantum_2010,jencova_comparison_2012,buscemi_less-noisy}.
\end{remark}

\begin{proof}
	Properties (ii) and (iii) are direct consequences of the definition of $\operatorname{sup}_\alpha(\rho/\sigma)$.
	
	In order to prove property (iv), we start by taking $\alpha>1$ and defining the following two quantities: $\epsilon\defeq\alpha-1$ and $\delta_\epsilon\defeq\operatorname{sup}_{1+\epsilon}(\rho/\sigma)-1$. With these notations, from the definition of $\supa(\rho/\sigma)$ we obtain
	\be\label{fr}
	\delta_\epsilon\geq\frac{\Tr[E\rho]}{\Tr[E\sigma]}-1= \frac{\Tr\left[E(\rho-\sigma)\right]}{\Tr\left[E\sigma\right]}\;,
	\ee
	for all $\frac{1}{1+\epsilon}\openone\leq E\leq \openone$.
	Introducing the operator
	\be
	\Delta\defeq\frac{1}{\epsilon}\left[(1+\epsilon)E-\openone\right]\;,
	\ee
	we get that Eq.~\eqref{fr} is equivalent to 
	\be
	\delta_\epsilon\geq \frac{\epsilon\Tr\left[\Delta(\rho-\sigma)\right]}{1+\epsilon\Tr\left[\Delta\sigma\right]}\;,
	\ee
	for all $0\leq \Delta\leq \openone$. Hence, $\lim_{\epsilon\to 0}\delta_\epsilon=0$.
	We therefore have
	\begin{align}
	\lim_{\epsilon\to 0}H_{1+\epsilon}(\rho\|\sigma)&=\frac{1}{\ln(2)}\lim_{\epsilon\to 0}\frac{1+\epsilon}{\epsilon}\ln(1+\delta_\epsilon)\nonumber\\
	&=\frac{1}{\ln(2)}\lim_{\epsilon\to 0}\frac{1}{\epsilon}\delta_\epsilon\nonumber\\
	&\geq \frac{1}{\ln(2)}\lim_{\epsilon\to 0}\frac{1}{\epsilon}\frac{\epsilon\Tr\left[\Delta(\rho-\sigma)\right]}{1+\epsilon\Tr\left[\Delta\sigma\right]}\nonumber\\
	&=\frac{1}{\ln(2)}\Tr\left[\Delta(\rho-\sigma)\right]\;,
	\end{align}
	for all $0\leq \Delta\leq \openone$. 
	We therefore conclude that 
	\be
	H_{1}(\rho\|\sigma)\geq \frac{1}{\ln(2)}\Tr[(\rho-\sigma)_+]=\frac{1}{2\ln(2)}\N{\rho-\sigma}_1
	\ee
	where we chose $\Delta$ to be the projection to the positive part of $\rho-\sigma$. To see that $H_{1}(\rho\|\sigma)=\frac{1}{2\ln 2}\N{\rho-\sigma}_1$ note that, in fact, by definition
	\begin{equation*}
	\begin{split}
		\delta_\epsilon&=\max_{\frac{1}{1+\epsilon}\openone\leq E\leq \openone} \frac{\Tr\left[E(\rho-\sigma)\right]}{\Tr\left[E\sigma\right]}\\
		&=\max_{0\leq \Delta\leq \openone} \frac{\epsilon\Tr\left[\Delta(\rho-\sigma)\right]}{1+\epsilon\Tr\left[\Delta\sigma\right]}\\
		&=\epsilon\max_{0\leq \Delta\leq \openone} \Tr\left[\Delta(\rho-\sigma)\right]+O(\epsilon^2)
	\end{split}
	\end{equation*} 
	Hence, in the limit $\epsilon\to 0$ we get $\lim_{\epsilon\to 0}\frac{1}{\epsilon}\delta_{\epsilon}=\Tr[(\rho-\sigma)_+]=\frac12\N{\rho-\sigma}_1$.
	
	Finally, property (i) is proved as follows. Since $\sup_{\alpha}(\rho\|\sigma)\geq 1$ ,
	we always have $H_\alpha(\rho\|\sigma)\geq 0$. 
	For $\alpha>1$ if $H_\alpha(\rho\|\sigma)=0$ then $\supa(\rho\|\sigma)=1$. Hence, $\Tr[E\rho]\leq\Tr[E\sigma]$ for all $\alpha^{-1} \openone\leq E\leq \openone$. Introducing
	\begin{equation*}
		\Delta\defeq\frac{\alpha}{\alpha-1}(E-\frac{1}{\alpha}\openone),
	\end{equation*}
	we get $\Tr[\Delta\rho]\leq\Tr[\Delta\sigma]$, namely, $\Tr[\Delta\;(\rho-\sigma)]\le 0$, for all $0\leq \Delta\leq \openone$. We therefore must have $\rho=\sigma$. The case $\alpha=1$ follows from property (iv).
\end{proof}

\section{Relative majorization as sets of inequalities}\label{sec:reform}

We are now in a position to provide a set of alternative conditions, reformulating the relative majorization ordering $(\rho_1,\rho_2)\succ(\rho'_1,\rho'_2)$ as sets of inequalities.
\begin{theorem}\label{theo:1}
	Consider two pairs of density matrices $(\rho_1,\rho_2)$ on $\mathbb{C}^n$ and $(\rho'_1,\rho'_2)$ on $\mathbb{C}^m$. The following are equivalent:
	\begin{enumerate}[label=\roman*)]
		\item $(\rho_1,\rho_2)\succ(\rho'_1,\rho'_2)$;
		\item for all $t\ge0$, $\N{\rho_1-t\rho_2}_1\ge\N{\rho'_1-t\rho'_2}_1$;
		\item for all $\alpha\ge1$,
		\[
		\begin{cases}
		H_\alpha(\rho_1\|\rho_2)\ge H_\alpha(\rho'_1\|\rho'_2),\\  H_\alpha(\rho_2\|\rho_1)\ge H_\alpha(\rho'_2\|\rho'_1);
		\end{cases}
		\]
		\item for all $0\le\epsilon\le1$, $D^\epsilon_{H}(\rho_1\| \rho_2)\geq D^\epsilon_{H}(\rho_1'\| \rho_2')$.
	\end{enumerate}
\end{theorem}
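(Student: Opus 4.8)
The plan is to use condition (ii) as the central hub and establish $(\mathrm{i})\Leftrightarrow(\mathrm{ii})$, $(\mathrm{ii})\Leftrightarrow(\mathrm{iv})$ and $(\mathrm{ii})\Leftrightarrow(\mathrm{iii})$ in turn. The unifying object throughout is the single-variable function $g_{\rho\sigma}(t)\defeq\N{\rho-t\sigma}_1$, which is convex in $t\ge0$ (an affine pencil composed with a norm), has derivative $g_{\rho\sigma}'(t)=-\Tr[\mathrm{sgn}(\rho-t\sigma)\,\sigma]\in[-1,1]$ that tends to $1$ as $t\to\infty$, and obeys the self-duality $g_{\rho\sigma}(t)=t\,g_{\sigma\rho}(1/t)$ for $t>0$. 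Since everything below reduces to comparing such scalar curves, the mismatch in dimensions ($\mathbb{C}^n$ versus $\mathbb{C}^m$) is immaterial.

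For $(\mathrm{i})\Leftrightarrow(\mathrm{ii})$ I would argue geometrically. Each testing region is a compact convex subset of $\mathbb{R}^2$ centrally symmetric about $(1/2,1/2)$, so the inclusion $\set{T}(\rho_1,\rho_2)\supseteq\set{T}(\rho'_1,\rho'_2)$ of condition (i) is equivalent to domination of support functions in every direction $(a,b)$. Maximising a linear functional over $0\le E\le\openone$ gives
\[\sup_{0\le E\le\openone}\Tr[E(a\rho_2+b\rho_1)]=\Tr[(a\rho_2+b\rho_1)_+]=\tfrac12\big(a+b+\N{a\rho_2+b\rho_1}_1\big),\]
using $\Tr[M_+]=\tfrac12(\Tr M+\N{M}_1)$ and $\Tr\rho_i=1$. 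Directions with $a,b$ of the same sign yield trace-determined, state-independent values, and central symmetry identifies $(a,b)$ with $(-a,-b)$; the remaining directions reduce, after rescaling, exactly to the quantities $\N{\rho_1-t\rho_2}_1$ with $t\ge0$, the complementary family $\N{\rho_2-s\rho_1}_1$ being recovered from self-duality. Domination of support functions therefore collapses precisely to condition (ii).

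For $(\mathrm{ii})\Leftrightarrow(\mathrm{iv})$ I would invoke the SDP reformulation of Eq.~(\ref{eq:how-to-qlc}), namely $Q^\epsilon(\rho_1\|\rho_2)=\max_{r\ge0}f_\epsilon(r)$ with $f_\epsilon(r)=\tfrac12[1+(1-2\epsilon)r-\N{r\rho_1-\rho_2}_1]$. In the forward direction, condition (ii) (equivalently, by self-duality, $\N{r\rho_1-\rho_2}_1\ge\N{r\rho'_1-\rho'_2}_1$ for all $r\ge0$) gives $f_\epsilon\le f'_\epsilon$ pointwise, hence $Q^\epsilon(\rho_1\|\rho_2)\le Q^\epsilon(\rho'_1\|\rho'_2)$, i.e.\ $D^\epsilon_H(\rho_1\|\rho_2)\ge D^\epsilon_H(\rho'_1\|\rho'_2)$. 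For the converse I would read $2Q^\epsilon-1=\max_{r\ge0}[\mu r-\N{r\rho_1-\rho_2}_1]$ as the concave conjugate in the variable $\mu\defeq1-2\epsilon$; as $\epsilon$ ranges over $[0,1]$ the slope $\mu$ sweeps all of $[-1,1]$, which matches the slope range of the convex map $r\mapsto\N{r\rho_1-\rho_2}_1$, so biconjugation recovers the curve and domination of every $Q^\epsilon$ forces pointwise domination of the norm, i.e.\ (ii).

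For $(\mathrm{ii})\Leftrightarrow(\mathrm{iii})$ I would first evaluate $\supa(\rho/\sigma)$ by a Dinkelbach-type reduction: writing $E=\alpha^{-1}\openone+(1-\alpha^{-1})F$ with $0\le F\le\openone$ yields
\[\max_{\alpha^{-1}\openone\le E\le\openone}\Tr[E(\rho-\lambda\sigma)]=\tfrac1\alpha(1-\lambda)+\big(1-\tfrac1\alpha\big)\Tr[(\rho-\lambda\sigma)_+],\]
and setting this to zero (the largest root $\lambda=s$ being $\supa(\rho/\sigma)\ge1$) gives the clean relation $\N{\rho-s\sigma}_1=\tfrac{\alpha+1}{\alpha-1}(s-1)$. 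Geometrically, $s$ is the abscissa where the convex curve $g_{\rho\sigma}$ meets the line of slope $m_\alpha\defeq\tfrac{\alpha+1}{\alpha-1}$ through $(1,0)$; since $\lambda\mapsto g_{\rho\sigma}(\lambda)-m_\alpha(\lambda-1)$ has slope $\le1-m_\alpha<0$ it is strictly decreasing with a unique root, and as $\alpha$ sweeps $(1,\infty)$ the meeting points trace $g_{\rho\sigma}$ on $[1,\infty)$. Because $H_\alpha$ is a positive increasing function of $\supa$, curve domination implies domination of all $H_\alpha$, and the strict monotonicity makes the correspondence invertible. The two inequalities of (iii) then pin down the $t\ge1$ branch of $g_{\rho_1\rho_2}$ and, through self-duality, the $t\le1$ branch (the curve $g_{\rho_2\rho_1}$), jointly matching (ii) on all of $[0,\infty)$; the limits $m_\alpha\to\infty$ and $m_\alpha\to1$ reproduce the trace-distance and $D_{\max}$ endpoints of Theorem~\ref{prop:1}. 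The main obstacle lies exactly in the two converse directions, where a one-parameter family of scalar divergences must be shown to reconstruct the entire dominated convex curve: this is where convexity of $g_{\rho\sigma}$ and the sharp slope range $[-1,1]$ (matching $\mu\in[-1,1]$ and $m_\alpha\in(1,\infty)$) are indispensable, and where I would treat with care the non-smooth points of $g_{\rho\sigma}$ (via subgradients) together with the boundary cases $t\in\{0,1,\infty\}$ and $\alpha\in\{1,\infty\}$.
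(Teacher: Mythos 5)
Your proposal is correct and follows essentially the same route as the paper: condition (ii) as the hub, the support-function computation $\max_{0\le E\le\openone}\Tr[E(a\rho_1+b\rho_2)]=\tfrac12(a+b+\N{a\rho_1+b\rho_2}_1)$ for (i)$\Leftrightarrow$(ii), the variational formula $Q^\epsilon=\max_{r\ge0}f_\epsilon(r)$ for (ii)$\Leftrightarrow$(iv), and the relation $\N{\rho-s_\alpha\sigma}_1=\tfrac{\alpha+1}{\alpha-1}(s_\alpha-1)$ at $s_\alpha=\supa(\rho/\sigma)$ for (ii)$\Leftrightarrow$(iii). The only cosmetic difference is that you phrase the converse of (iv)$\Rightarrow$(ii) as Fenchel biconjugation of the convex curve $r\mapsto\N{r\rho_1-\rho_2}_1$ over slopes $\mu\in[-1,1]$, where the paper instead checks domination at the slope-change points $r_\epsilon$ --- two equivalent statements of the same convex-duality fact.
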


We split the proof into several lemmas.

\begin{lemma}\label{lem:separation}
	Given two pairs of density operators $(\rho_1,\rho_2)$ and
	$(\rho_1',\rho_2')$ on $\mathbb{C}^n$ and $\mathbb{C}^m$, respectively, the following are equivalent:
	\begin{enumerate}[label=\roman*)]
		\item $(\rho_1,\rho_2)\succ(\rho'_1,\rho'_2)$;
		\item $\set{T}(\rho_1,\rho_2)\supseteq\set{T}(\rho'_1,\rho'_2)$;
		\item $\N{t_1\rho_1+t_2\rho_2}_1\ge
		\N{t_1\rho_1'+t_2\rho_2'}_1$ for all $t_1,t_2\in\mathbb{R}$;
		\item $\N{\rho_1-t\rho_2}_1\ge \N{\rho'_1-t\rho_2'}_1$, for
		all $t\ge 0$.
	\end{enumerate}
\end{lemma}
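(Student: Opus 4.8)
The plan is to prove the chain of equivalences (i) $\Leftrightarrow$ (ii) $\Leftrightarrow$ (iii) $\Leftrightarrow$ (iv) by closing a loop of implications. The equivalence (i) $\Leftrightarrow$ (ii) is essentially the definition of relative majorization via testing-region inclusion, so I would first recall that $\set{T}(\rho_1,\rho_2)$ is a closed convex region, symmetric under $(x,y)\mapsto(1-x,1-y)$, that always contains $(0,0)$ and $(1,1)$, and whose interesting part is its upper boundary. The crucial structural observation is that such a region is completely determined by its supporting hyperplanes, and because of the symmetry it suffices to control supporting lines with nonnegative slope. This is the geometric fact that will convert the set-inclusion statement into a family of scalar inequalities.

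The heart of the matter is the implication (ii) $\Rightarrow$ (iii), the support-function characterization. For a fixed direction $(t_1,t_2)\in\mathbb{R}^2$, the supporting value of $\set{T}(\rho_1,\rho_2)$ is $\sup_{0\le E\le\openone}\Tr[E(t_1\rho_2+t_2\rho_1)]$, and I would compute this explicitly: the optimal $E$ is the projector onto the positive part of $t_1\rho_2+t_2\rho_1$, giving $\Tr[(t_1\rho_2+t_2\rho_1)_+]$. Using the standard identity $\Tr[X_+]=\tfrac12(\Tr[X]+\N{X}_1)$ together with $\Tr[t_1\rho_2+t_2\rho_1]=t_1+t_2$, the support function becomes an affine function of $(t_1,t_2)$ plus a multiple of $\N{t_1\rho_2+t_2\rho_1}_1$. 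Since $\Tr[\cdot]$ is the same for both pairs (all states have trace one), comparing support functions in every direction reduces exactly to the trace-norm inequalities of (iii), namely $\N{t_1\rho_1+t_2\rho_2}_1\ge\N{t_1\rho_1'+t_2\rho_2'}_1$ for all real $t_1,t_2$. The reverse direction (iii) $\Rightarrow$ (ii) follows because a closed convex set is the intersection of its supporting half-spaces, so dominating all support functions forces the inclusion.

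The last two equivalences are comparatively routine. For (iii) $\Leftrightarrow$ (iv) I would use homogeneity of the trace norm to normalize: writing $t_2=-t$, the case $t\ge 0$ gives $\N{t_1\rho_1+t_2\rho_2}_1=|t_1|\,\N{\rho_1-(t/t_1)\rho_2}_1$ after pulling out $t_1>0$ (and the symmetry of the region handles the sign cases and the exchange of the two states), so the two-parameter family collapses to the one-parameter family $\N{\rho_1-t\rho_2}_1\ge\N{\rho_1'-t\rho_2'}_1$ for $t\ge 0$.

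The main obstacle I anticipate is being careful with the passage from the full two-parameter family in (iii) to the restricted nonnegative-slope family in (iv): one must verify that directions corresponding to negative $t$, or to swapping the roles of $\rho_1$ and $\rho_2$, are automatically covered by the symmetry of the testing region and by homogeneity, so that no information is lost. A secondary point requiring care is the boundary behavior at $t=0$ and $t\to\infty$ (equivalently the directions aligned with the coordinate axes), where the support function degenerates; these correspond to the trivial containment of the anchor points $(0,0)$ and $(1,1)$ and must be checked not to impose extra constraints. Once the reduction is handled cleanly, the identification with the Hilbert $\alpha$-divergences in (iii) of the theorem and with $D_H^\epsilon$ in (iv) follows from Eq.~\eqref{eq:how-to-qlc} and the definitions, since those quantities are precisely Legendre-type transforms of the same one-parameter family.
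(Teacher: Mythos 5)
Your proposal is correct and follows essentially the same route as the paper: the separation/support-function argument reduces the inclusion of testing regions to the trace-norm inequalities of (iii) via $2\Tr(A)_+=\N{A}_1+\Tr[A]$, and homogeneity collapses (iii) to the one-parameter family (iv). The only small imprecision is that the same-sign directions $t_1,t_2\ge0$ (or both $\le0$) are disposed of not by the symmetry of the testing region but by positivity and unit trace, which force $\N{t_1\rho_1+t_2\rho_2}_1=|t_1+t_2|=\N{t_1\rho_1'+t_2\rho_2'}_1$ identically for both pairs, so these directions carry no information.
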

\begin{proof}
	The first equivalence holds by definition. Denoting by $(p,\bar p)$ and $(q,\bar q)$ the generic element of $\set{T}(\rho_1,\rho_2)$ and $\set{T}(\rho_1',\rho_2')$, respectively,
	the Separation Theorem for convex sets, applied to
	$\set{T}(\rho_1,\rho_2)$ and $\set{T}(\rho_1',\rho_2')$, states that
	$\set{T}(\rho_1,\rho_2)\supseteq \set{T}(\rho_1',\rho_2')$ if and only if, for
	any $v=(a,b)\in\mathbb{R}^2$,
	\begin{equation}\label{eq:cond-sep-th}
	\max_{(p,\bar p)\in \set{T}(\rho_1,\rho_2)}\left[a p +b\bar p\right]\ge \max_{(q,\bar q)\in \set{T}(\rho_1',\rho_2')}\left[a q +b\bar q\right].
	\end{equation}
	The next step is to show that
	\begin{equation}
	\max_{(p,\bar p)\in \set{T}(\rho_1,\rho_2)}\left[a p +b\bar p\right]=\frac{a+b+\N{a\rho_1-b\rho_2}_1}2,
	\end{equation}
	and, analogously, for $(\rho_1',\rho_2')$. This is done by the following
	simple passages:
	\begin{equation}
	\begin{split}
	\max_{(p,\bar p)\in \set{T}(\rho_1,\rho_2)}\left[a p +b\bar
	p\right]&=\max_{0\le E\le \openone}\left\{a\Tr[\rho_1\ E]+b\Tr[\rho_2\ E]\right\}\\
	&=\max_{0\le
		E\le\openone}\Tr[\left(a\rho_1+b\rho_2\right)\ E]\\
	&=\Tr\left(a\rho_1+b\rho_2\right)_+,
	\end{split}
	\end{equation}
	where the last expression denotes the positive part of the
	self-adjoint operator $a\rho_1+b\rho_2$. Then, since $2\Tr(A)_+=\N{A}_1+\Tr[A]$ for any self-adjoint operator, we have that
	\begin{equation}
	2\max_{(p,\bar p)\in \set{T}(\rho_1,\rho_2)}\left[a p +b\bar
	p\right]=a+b+\N{a\rho_1+b\rho_2}_1.
	\end{equation}
	This proves that Eq.~(\ref{eq:cond-sep-th}) is satisfied if and
	only if $\N{a\rho_1+b\rho_2}_1\ge \N{a\rho_1'+b\rho_2'}_1$, for all $a,b\in\mathbb{R}$.

	We are left to prove that (iii) is equivalent to (iv). However, since (iv) is a special case of (iii), we only need to prove that (iv) implies (iii). To this end, we notice that, whenever $t_1,t_2\ge 0$ or $t_1,t_2\le 0$, $\N{t_1\rho_1+t_2\rho_2}_1=
	\N{t_1\rho'_1-t_2\rho_2'}_1$ always, simply due to the positivity
	of $\rho_1,\rho_2,\rho_1',\rho_2'$. We can hence consider only the
	cases $t_1> 0> t_2$ or $t_2>
	0> t_1$. However, since
	$\N{X}_1=\N{-X}_1$, for any matrix $X$, we can further
	restrict the parameters $t_1$ and $t_2$ to the case $t_2< 0< t_1$. The statement is finally obtained by rescaling both
	$t_1$ and $t_2$ by the (positive) factor $1/t_1$.
\end{proof}

Lemma~\ref{lem:separation} above shows that statements (i) and (ii) of Theorem~\ref{theo:1} are indeed equivalent. We now move on to proving the equivalence of the point (iii). We begin with the following lemma.

\begin{lemma}\label{p3}
	For any choice of density operators $\rho$ and $\sigma$,
	\begin{equation*}
	\supa(\rho/\sigma)=\inf\left\{\lambda\geq 1\;:\;\frac{\N{\lambda\sigma-\rho}_1}{\lambda-1}\leq\frac{\alpha+1}{\alpha-1} \right\}\;.
	\end{equation*}
\end{lemma}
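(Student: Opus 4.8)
The plan is to rewrite $\supa(\rho/\sigma)$ as an infimum over a single scalar $\lambda$ and then evaluate the resulting optimization over the test operator $E$ in closed form. First I would note that every admissible $E$ satisfies $E\ge\alpha^{-1}\openone$, hence $\Tr[E\sigma]\ge\alpha^{-1}>0$, so the ratio defining $\supa$ is always well behaved. Writing $s\defeq\supa(\rho/\sigma)$, the fact that the supremum runs over $\alpha^{-1}\openone\le E\le\openone$ means precisely that $s$ is the least $\lambda$ for which $\Tr[E\rho]\le\lambda\Tr[E\sigma]$ holds for all such $E$; equivalently
\[
s=\inf\Big\{\lambda\;:\;\min_{\alpha^{-1}\openone\le E\le\openone}\Tr\big[E(\lambda\sigma-\rho)\big]\ge 0\Big\}.
\]
Since the choice $E=\openone$ gives ratio $1$, we have $s\ge 1$, so the infimum is effectively over $\lambda\ge 1$, matching the statement.

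The core computation is the inner minimization. Setting $X\defeq\lambda\sigma-\rho$ and diagonalizing this self-adjoint operator, the functional $\Tr[EX]$ depends on $E$ only through its diagonal entries in the eigenbasis of $X$, each constrained to $[\alpha^{-1},1]$. The minimum is therefore attained by an $E$ diagonal in that eigenbasis, taking the value $\alpha^{-1}$ on the positive eigenspace of $X$ and $1$ on the negative one, giving
\[
\min_{\alpha^{-1}\openone\le E\le\openone}\Tr[EX]=\tfrac{1}{\alpha}\Tr[X_+]-\Tr[X_-].
\]
Hence the positivity condition is $\Tr[X_+]\ge\alpha\,\Tr[X_-]$.

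It then remains to translate this into the stated trace-norm inequality, which is pure bookkeeping with positive and negative parts. Using $\Tr[X]=\lambda\Tr[\sigma]-\Tr[\rho]=\lambda-1$ together with $\Tr[X_\pm]=\tfrac12(\N{X}_1\pm\Tr[X])$, the condition $\Tr[X_+]\ge\alpha\,\Tr[X_-]$ rearranges to $(\lambda-1)(\alpha+1)\ge(\alpha-1)\N{\lambda\sigma-\rho}_1$, which for $\lambda>1$ and $\alpha>1$ is exactly $\N{\lambda\sigma-\rho}_1/(\lambda-1)\le(\alpha+1)/(\alpha-1)$. Taking the infimum over admissible $\lambda$ yields the claimed identity.

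I do not expect a serious obstacle here: the one delicate point is the boundary behaviour at $\lambda=1$, where the fraction is undefined (indeed $+\infty$ unless $\rho=\sigma$), so I would treat the degenerate case $\rho=\sigma$ separately—there $\supa=1$ and the trace-norm condition $\N{\lambda\sigma-\sigma}_1/(\lambda-1)=1\le(\alpha+1)/(\alpha-1)$ holds for every $\lambda>1$, giving infimum $1$ as required. I would also record that the passage from the supremum to the infimum characterization is clean even when $s$ is not attained, since it is handled by the $\inf$/$\sup$ duality stated above. The single substantive step is the closed-form evaluation of the minimization over the order interval $[\alpha^{-1}\openone,\openone]$; everything else reduces to the trace normalization $\Tr[\rho]=\Tr[\sigma]=1$ and the elementary identities for $X_\pm$.
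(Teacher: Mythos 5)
Your proposal is correct and follows essentially the same route as the paper: both reduce $\supa(\rho/\sigma)$ to the least $\lambda$ with $\Tr[E(\lambda\sigma-\rho)]\ge 0$ for all admissible $E$, identify the extremal test $E=\alpha^{-1}\Pi_++\Pi_-$ (your closed-form minimization $\alpha^{-1}\Tr[X_+]-\Tr[X_-]$ is exactly the paper's ``toughest constraint'' choice), and then use $\Tr[X]=\lambda-1$ together with the identities for $\Tr[X_\pm]$ to arrive at the trace-norm inequality. Your explicit treatment of the $\lambda=1$ boundary and the $\rho=\sigma$ case is a small tidiness bonus over the paper's remark that restricting to $\lambda\ge 1$ loses no generality, but the substance is identical.
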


\begin{proof}
	Note first that
	\begin{equation*}
	\begin{split}
		&\supa(\rho/\sigma)\\	
		&\defeq\sup_{\alpha^{-1}\openone\le E\le\openone}\left\{{\Tr[E\rho]}/{\Tr[E\sigma]}\right\} \\
		&=\inf\{\lambda:\lambda\ge{\Tr[E\rho]}/{\Tr[E\sigma]}\textrm{ for all }\alpha^{-1}\openone\le E\le\openone \}\\
		& =\inf\left\{\lambda:\Tr\left[E\left(\lambda\sigma-\rho\right)\right]\geq 0\textrm{ for all }\alpha^{-1}\openone\le E\le\openone \right\}\\
		%&  \inf_{\alpha^{-1}\openone\le E\le\openone}\left\{\lambda\in\mbb{R}\;:\;\Tr\left[E\left(\lambda\sigma-\rho\right)_+\right]\geq \Tr\left[E\left(\lambda\sigma-\rho\right)_-\right]\right\}\\
		&=\inf\left\{\lambda\in\mbb{R}\;:\;\alpha^{-1}\Tr\left[\left(\lambda\sigma-\rho\right)_+\right]\geq\Tr\left[\left(\lambda\sigma-\rho\right)_-\right] \right\}\;,		
	\end{split}	
	\end{equation*}
	where, in the last equality, we used the decomposition $A=A_+-A_-$ for Hermitian operators and the choice $E=\alpha^{-1}\Pi_++\Pi_-$, being $\Pi_\pm$ the projectors onto the positive and negative parts of $(\lambda\sigma-\rho)$, respectively. Indeed, this is choice for the operator $E$ that poses the toughest constraints compatible with the fixed value of the parameter $\alpha$. (Equivalently, if $\Tr[E(\lambda\sigma-\rho)]\ge0$ for such a choice of $E$, then it is positive for any $\alpha^{-1}\openone\le E\le \openone$.)
	
	Then, using the relations
	\begin{equation*}
	\lambda-1=\Tr\left[\left(\lambda\sigma-\rho\right)_+\right]-\Tr\left[\left(\lambda\sigma-\rho\right)_-\right]
	\end{equation*}
	and
	\begin{equation*}
	\N{\lambda\sigma-\rho}_1=\Tr[(\lambda\sigma-\rho)_+]+\Tr[(\lambda\sigma-\rho)_-]
	\end{equation*}
	gives
	\begin{equation*}
	\supa(\rho\|\sigma)=\inf\left\{\lambda\in\mbb{R}\;:\;\Tr\left[\left(\lambda\sigma-\rho\right)_-\right]\leq\frac{\lambda-1}{\alpha-1} \right\}.
	\end{equation*}
	Then, since $\Tr\left[\left(\lambda\sigma-\rho\right)_-\right]\leq\frac{\lambda-1}{\alpha-1}$ if and only if $\N{\lambda\sigma-\rho}_1\le\frac{\lambda-1}{\alpha-1}+\Tr[(\lambda\sigma-\rho)_+]=\frac{\lambda-1}{\alpha-1}+\frac{\lambda-1+\N{\lambda\sigma-\rho}_1}{2}$, after an easy manipulation we obtain
	\[
	\supa(\rho/\sigma)=\inf\left\{\lambda\in\mathbb{R}\;:\;\frac{\N{\lambda\sigma-\rho}_1}{\lambda-1}\leq\frac{\alpha+1}{\alpha-1} \right\}\;.
	\]
	The statement is finally recovered by noticing that no loss of generality comes from restricting $\lambda$ to values greater than or equal to 1.
\end{proof}

\begin{lemma}\label{3}
		For any choice of density operators $\rho$ and $\sigma$, the function
	\[
	f(\lambda)=\frac{\N{\lambda\sigma-\rho}_1}{\lambda-1}
	\]
	is monotonically non-increasing in the domain $\lambda\ge1$ with $f(1)=\infty$ and $f(\infty)=1$.
\end{lemma}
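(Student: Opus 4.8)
The plan is to reduce all three claims to elementary properties of the single non-negative scalar $b(\lambda)\defeq\Tr[(\lambda\sigma-\rho)_-]$, the trace of the negative part of $\lambda\sigma-\rho$. The key is the pair of identities already exploited in Lemma~\ref{p3}: since $\Tr[\lambda\sigma-\rho]=\lambda-1$ one has $\Tr[(\lambda\sigma-\rho)_+]-\Tr[(\lambda\sigma-\rho)_-]=\lambda-1$, while by definition $\N{\lambda\sigma-\rho}_1=\Tr[(\lambda\sigma-\rho)_+]+\Tr[(\lambda\sigma-\rho)_-]$. Adding these gives $\N{\lambda\sigma-\rho}_1=(\lambda-1)+2b(\lambda)$, so that
\[
	f(\lambda)=1+\frac{2b(\lambda)}{\lambda-1}\,.
\]
After this rewriting the whole behaviour of $f$ is controlled by $b$, which I record through the variational formula $b(\lambda)=\max_{0\le E\le\openone}\Tr[E(\rho-\lambda\sigma)]$ (the maximizer being the projector onto the negative eigenspace of $\lambda\sigma-\rho$).

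The two limits then fall out immediately. For $\lambda\to\infty$, the variational formula yields $b(\lambda)\le\Tr[\rho]=1$, so $b$ is bounded and $2b(\lambda)/(\lambda-1)\to0$, giving $f(\infty)=1$. For $\lambda\to1^+$, the numerator tends to $2b(1)=2\Tr[(\sigma-\rho)_-]=\N{\sigma-\rho}_1>0$, where I use that $\sigma-\rho$ is traceless so its positive and negative parts carry equal trace; the denominator tends to $0^+$, so $f(1^+)=+\infty$. (In the degenerate case $\rho=\sigma$ one has $f\equiv1$, so $f(1)=\infty$ is to be read under the assumption $\rho\ne\sigma$.)

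The monotonicity is the only part requiring real work, and here I would argue directly from the variational formula rather than differentiate the non-smooth map $\lambda\mapsto b(\lambda)$. Fix $1<\lambda_1<\lambda_2$ and let $Q$, with $0\le Q\le\openone$, attain the maximum defining $b(\lambda_2)$, so that $b(\lambda_2)=c-\lambda_2 d$ with $c\defeq\Tr[Q\rho]\ge0$ and $d\defeq\Tr[Q\sigma]\ge0$. The crucial sign fact is that $b(\lambda_2)\ge0$ forces $c\ge\lambda_2 d\ge d$. Feasibility of $Q$ at $\lambda_1$ gives $b(\lambda_1)\ge c-\lambda_1 d$, so it suffices to prove that for fixed $c\ge d\ge0$ the rational function $\phi(\lambda)\defeq(c-\lambda d)/(\lambda-1)$ is non-increasing on $(1,\infty)$; this is the one-line computation $\phi'(\lambda)=(d-c)/(\lambda-1)^2\le0$. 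Chaining the inequalities then gives $f(\lambda_1)=1+2b(\lambda_1)/(\lambda_1-1)\ge1+2\phi(\lambda_1)\ge1+2\phi(\lambda_2)=1+2b(\lambda_2)/(\lambda_2-1)=f(\lambda_2)$, as required.

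I expect the only place to stumble is the sign bookkeeping in $c\ge\lambda_2 d\ge d$: it is precisely the non-negativity of $b$ together with $\lambda_2>1$ that pins the slope of the optimal affine piece, relative to the point $(1,0)$, to the decreasing side, and this is what propagates to $f$. As an alternative packaging I could note that $b$ is a maximum of affine functions, hence convex, continuous and non-increasing (each slope $-\Tr[E\sigma]$ is $\le0$); then on each linear piece the quotient $g(\lambda)\defeq b(\lambda)/(\lambda-1)$ has $g'(\lambda)=[b'(\lambda)(\lambda-1)-b(\lambda)]/(\lambda-1)^2\le0$ because $b'\le0$ and $b\ge0$, and continuity across the finitely many kinks upgrades this to global monotonicity. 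I would nonetheless present the direct two-point argument, which sidesteps differentiability of $b$ altogether.
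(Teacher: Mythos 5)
Your proof is correct, but it takes a genuinely different route from the paper's. The paper substitutes $t=1/(\lambda-1)$, rewrites $f(\lambda)=g(t)\defeq\N{(1+t)\sigma-t\rho}_1$, and deduces monotonicity from two soft facts: $g$ is convex (triangle inequality) and $g(0)=1\le g(t)$, which together force $g$ to be non-decreasing on $[0,\infty)$. You instead isolate the negative part, writing $f(\lambda)=1+2b(\lambda)/(\lambda-1)$ with $b(\lambda)=\Tr[(\lambda\sigma-\rho)_-]=\max_{0\le E\le\openone}\Tr[E(\rho-\lambda\sigma)]$, and run a direct two-point comparison through the optimal $Q$ at the larger $\lambda$; the sign bookkeeping $c\ge\lambda_2 d\ge d$ is exactly right and is what makes the affine quotient $\phi$ non-increasing. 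Each approach buys something: the paper's is shorter and hides all the eigenvalue structure inside convexity of the trace norm, while yours makes explicit where the monotonicity comes from (non-negativity of the negative-part functional pins the slope of the optimal affine minorant) and, as a bonus, actually proves the two boundary values $f(1^+)=\infty$ and $f(\infty)=1$, which the paper's proof asserts but does not derive, and correctly flags the degenerate case $\rho=\sigma$ where $f\equiv 1$ and the statement $f(1)=\infty$ must be read with that exclusion. Your closing remark that $b$ is a pointwise maximum of non-increasing affine functions, hence convex and non-increasing, is essentially a reparametrized version of the paper's convexity argument, so the two proofs are ultimately close cousins, but the argument you present as primary is self-contained and valid.
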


\begin{remark}\label{rem:supa}
	In particular, Lemma~\ref{p3} and Lemma~\ref{3} above imply that, for any pair of density operators $\rho$ and $\sigma$,
	\[
	\frac{\N{\supa(\rho/\sigma)\sigma-\rho}_1}{\supa(\rho/\sigma)-1}=\frac{\alpha+1}{\alpha-1}.
	\]
\end{remark}

\begin{proof}
	Set $t= 1/(\lambda-1)$ and define $g(t)=\N{(1+t)\sigma-t\rho}_1$. Hence, $g(t)=f(\lambda)$, and it is enough to show that $g(t)$ is monotonically non-decreasing in its domain $t\in[0,\infty)$. First note that for any $0<p<1$ and $t_1,t_2\in\mbb{R}_+$ we have
	\begin{align*}
	& g\left(pt_1+(1-p)t_2\right) \\ 
	=&\N{p(1+t_1)\sigma-pt_1\rho+(1-p)(1+t_2)\sigma-(1-p)t_2\rho}_1\\
	\leq& \N{p(1+t_1)\sigma-pt_1\rho}_1+\N{(1-p)(1+t_2)\sigma-(1-p)t_2\rho}_1\\
	= &\; pg(t_1)+(1-p)g(t_2)\;.
	\end{align*}
	Hence $g(t)$ is a convex function. Moreover, note that $g(0)=1\leq g(t)$ for all $t\geq 0$. These two properties of $g(t)$ together imply that it is monotonically non-decreasing in $t$.
\end{proof}

\begin{lemma}\label{lem:sup-and-t}
	Consider two pairs of states $(\rho,\sigma)$ and $(\rho',\sigma')$. Then, the following are equivalent:
	\begin{enumerate}[label=\roman*)]
		\item for all $\alpha\ge1$, $\supa(\rho/\sigma)\geq\supa(\rho'/\sigma')$ and $\infa(\rho/\sigma)\le\infa(\rho'/\sigma')$;
		\item $\N{t\sigma-\rho}_1\geq\N{t\sigma'-\rho'}_1$ for all $t\ge 0$;
	\end{enumerate}
\end{lemma}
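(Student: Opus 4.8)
The plan is to reduce both conditions to pointwise inequalities between the two continuous, non-increasing functions $f(\lambda)=\N{\lambda\sigma-\rho}_1/(\lambda-1)$ and $f'(\lambda)=\N{\lambda\sigma'-\rho'}_1/(\lambda-1)$ (together with their $\rho\leftrightarrow\sigma$ swapped versions), and then to invoke the standard duality between the pointwise ordering of non-increasing functions and the ordering of their generalized inverses. Lemma~\ref{p3} supplies exactly the inverse-type representation of $\supa$, while Lemma~\ref{3} supplies the monotonicity and boundary behaviour that make the duality work.

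First I would rewrite condition (i) purely in terms of $\supa$. Recalling the relation noted in the discussion preceding Theorem~\ref{prop:1}, its $\alpha$-analogue reads $\infa(\rho/\sigma)=\inf_{\alpha^{-1}\openone\le E\le\openone}\Tr[E\rho]/\Tr[E\sigma]=1/\supa(\sigma/\rho)$, so the second inequality $\infa(\rho/\sigma)\le\infa(\rho'/\sigma')$ is equivalent to $\supa(\sigma/\rho)\ge\supa(\sigma'/\rho')$. Thus (i) becomes: for all $\alpha\ge1$, both $\supa(\rho/\sigma)\ge\supa(\rho'/\sigma')$ and $\supa(\sigma/\rho)\ge\supa(\sigma'/\rho')$.

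Next I would use Lemma~\ref{p3}, which gives $\supa(\rho/\sigma)=\inf\{\lambda\ge1:f(\lambda)\le(\alpha+1)/(\alpha-1)\}$. Writing $c=(\alpha+1)/(\alpha-1)$, the map $\alpha\mapsto c$ is a decreasing bijection of $(1,\infty)$ onto $(1,\infty)$, so quantifying over $\alpha>1$ is the same as quantifying over $c>1$ (the endpoint $\alpha=1$ is trivial, since $\operatorname{sup}_1\equiv1$). The core claim is then: for continuous non-increasing $f,f'$ with $f(1)=\infty$ and $f(\infty)=1$ as in Lemma~\ref{3}, the inequality $\supa(\rho/\sigma)\ge\supa(\rho'/\sigma')$ for all $c>1$ holds if and only if $f(\lambda)\ge f'(\lambda)$ for all $\lambda>1$. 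The easy direction is sublevel-set inclusion: $f\ge f'$ forces $\{f\le c\}\subseteq\{f'\le c\}$, so the infimum over the smaller set is the larger. The converse is the one delicate point: I would use continuity to write each sublevel set as a closed half-line $[\lambda_c,\infty)$ (the level $c$ being attained by the intermediate value theorem, which is precisely Remark~\ref{rem:supa} stating $f(\supa(\rho/\sigma))=c$), and then argue by monotonicity that inverse ordering forces $f'(\lambda_0)\le f(\lambda_0)$ at every $\lambda_0>1$. This converse step, resting essentially on continuity and the exact boundary values of Lemma~\ref{3}, is the main obstacle.

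Finally I would translate back. Clearing the positive factor $\lambda-1$, the first half becomes $\N{\lambda\sigma-\rho}_1\ge\N{\lambda\sigma'-\rho'}_1$ for all $\lambda>1$, i.e.\ condition (ii) on $t>1$. Running the identical argument for the swapped pair yields $\N{\lambda\rho-\sigma}_1\ge\N{\lambda\rho'-\sigma'}_1$ for $\lambda>1$; then, using $\N{\lambda\rho-\sigma}_1=\lambda\,\N{(1/\lambda)\sigma-\rho}_1$ and dividing by $\lambda$, this turns into $\N{t\sigma-\rho}_1\ge\N{t\sigma'-\rho'}_1$ with $t=1/\lambda\in(0,1)$. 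Continuity of $t\mapsto\N{t\sigma-\rho}_1$ closes the gap at $t=1$, and at $t=0$ both sides equal $1$ because $\rho,\rho'$ are normalized; so the two halves together are exactly condition (ii) on all of $t\ge0$, while conversely (ii) recovers both halves by restriction. This establishes (i)$\iff$(ii).
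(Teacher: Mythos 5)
Your proposal is correct and follows essentially the same route as the paper's proof: both directions reduce to the duality between the pointwise ordering of the non-increasing functions $f,f'$ of Lemma~\ref{3} and the ordering of their generalized inverses supplied by Lemma~\ref{p3} and Remark~\ref{rem:supa}, with the $\infa$ half handled by the swap $t\mapsto 1/t$. The only difference is presentational (you fix $\lambda_0$ and choose the level $c_0=f(\lambda_0)$, whereas the paper parametrizes by $\alpha$ and lets $M_\alpha$ sweep out $[1,\sup(\rho/\sigma)]$), and your explicit treatment of the endpoints $t\in\{0,1\}$ and of $t\ge\sup(\rho/\sigma)$ is, if anything, slightly more careful than the paper's.
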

\begin{proof}
	We only need to show that the condition $\supa(\rho/\sigma)\geq\supa(\rho'/\sigma')$ for all $\alpha\ge1$ is equivalent to $\N{t\sigma-\rho}_1\geq\N{t\sigma'-\rho'}_1$ for all $t\ge1$. Then, if this holds, the remaining statement, namely, that $\infa(\rho/\sigma)\le\infa(\rho'/\sigma')$ for all $\alpha\ge 1$ is equivalent to $\N{t\sigma-\rho}_1\geq\N{t\sigma'-\rho'}_1$ for all $t\in[0,1]$, simply follows from the definitions.
	
	Set $M_\alpha\equiv \supa(\rho/\sigma)$, $M_\alpha'\equiv\supa(\rho'/\sigma')$, and recall the definition of $f$ in Lemma~\ref{3}.
	Then, as noticed in Remark~\ref{rem:supa} above, it follows that
	\begin{equation}
	f(M_\alpha)=\frac{\N{M_\alpha\sigma-\rho}_1}{M_\alpha-1}=\frac{\alpha+1}{\alpha-1}=\frac{\N{M_\alpha'\sigma'-\rho'}_1}{M_\alpha'-1}.
	\end{equation}
	Since $f$ is monotonically non-increasing, we get that $M_\alpha\geq M_\alpha'$ implies that 
	\be
	\frac{\N{M_\alpha'\sigma'-\rho'}_1}{M_\alpha'-1}\geq \frac{\N{M_\alpha\sigma'-\rho'}_1}{M_\alpha-1}.
	\ee
	Combining the above two equations gives
	\be
	\N{M_\alpha\sigma-\rho}_1\geq \N{M_\alpha\sigma'-\rho'}_1\quad\forall\;\alpha\ge1.
	\ee
	
	We now make the simple observation that, by definition, the function $\sup_\alpha(\rho/\sigma)=M_\alpha$ is continuous and monotonically nondecreasing in $\alpha$, with $\sup_{\alpha=1}(\rho/\sigma)=1$ and $\sup_{\alpha\to\infty}(\rho/\sigma)=\sup(\rho/\sigma)$. Hence,
	\be
	\N{t\sigma-\rho}_1\geq \N{t\sigma'-\rho'}_1,\quad1\le\forall t\le\sup(\rho/\sigma)\;,
	\ee
	and the above is enough to conclude that the same ordering holds in fact for all $t\ge1$.
	
	Conversely, suppose the inequality above holds for all $t\geq 1$. This implies that,
	if 
	\be
	\frac{\N{\lambda\sigma-\rho}_1}{\lambda-1}\leq\frac{\alpha+1}{\alpha-1},
	\ee
	then also
	\be
	\frac{\N{\lambda\sigma'-\rho'}_1}{\lambda-1}\leq\frac{\alpha+1}{\alpha-1}\;.
	\ee
	But from Lemma~\ref{p3} this implies that $\supa(\rho'/\sigma')\le\supa(\rho/\sigma)$. 
\end{proof}
Lemma~\ref{lem:sup-and-t} above hence proves the equivalence of point~(ii) and point~(iii) of Theorem~\ref{theo:1}, because $\sup_\alpha(\rho_1/\rho_2)\ge\sup_\alpha(\rho'_1/\rho'_2)$ if and only if $H_\alpha(\rho_1\|\rho_2)\ge H_\alpha(\rho'_1\|\rho'_2)$, and $\inf_\alpha(\rho_1/\rho_2)\le\inf_\alpha(\rho'_1/\rho'_2)$ if and only if $H_\alpha(\rho_2\|\rho_1)\ge H_\alpha(\rho'_2\|\rho'_1)$.

The proof of Theorem~\ref{theo:1} is complete if we  prove the equivalence of the remaining point (iv). Also in this case, rather than proving the statement for $D^\epsilon_H$, we will prove it for the corresponding $Q^\epsilon$, related with $D^\epsilon_H$ as given in Eq.~(\ref{eq:Qeps}) of the main text. We recall the definition: given two density operators $\rho_1$ and $\rho_2$ on $\mathbb{C}^n$, for any $\epsilon\in[0,1]$,
\be\label{eq:recall-def}
Q^\epsilon(\rho_1\|\rho_2)=\min_{\substack{0\leq A\leq\openone_n\\ \tr[\rho_1 A]\geq 1-\epsilon }}\tr[\rho_2 A].
\ee
For later convenience, we introduce the following notation: $\op{H}_n$ to denote the set of $n$-by-$n$ Hermitian matrices on $\mathbb{C}^n$, and $\op{M}_{n,+}$ to denote the set of $n$-by-$n$ complex positive semi-definite matrices.

\begin{lemma}\label{lem:5}
Given two pairs of density operators $(\rho_1,\rho_2)$ and
$(\rho_1',\rho_2')$ on $\mathbb{C}^n$ and $\mathbb{C}^m$, respectively, the following are equivalent:
	\begin{enumerate}[label=\roman*)]
		\item $Q^\epsilon(\rho_1\|\rho_2)\leq Q^\epsilon(\rho'_1\|\rho'_2)$, for all $\epsilon\in[0,1]$;
		\item $\N{\rho_2-r\rho_1}_1\geq\N{\rho'_2-r\rho'_1}_1$, for all $r\ge 0$.
	\end{enumerate}
\end{lemma}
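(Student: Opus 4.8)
The plan is to reduce both conditions to a statement about the single convex function $N(r)\defeq\N{\rho_2-r\rho_1}_1$ (with primed counterpart $N'(r)\defeq\N{\rho'_2-r\rho'_1}_1$) by means of the SDP-duality reformulation of $Q^\epsilon$. First I would establish the identity $Q^\epsilon(\rho_1\|\rho_2)=\max_{r\ge0}f_\epsilon(r)$ announced in Eq.~(\ref{eq:how-to-qlc}): starting from the primal SDP~(\ref{eq:recall-def}), I form its Lagrange dual introducing a multiplier $r\ge0$ for the constraint $\Tr[\rho_1 A]\ge1-\epsilon$, and check that Slater's condition holds so that strong duality applies. Since $\N{r\rho_1-\rho_2}_1=\N{\rho_2-r\rho_1}_1=N(r)$, this rewrites $Q^\epsilon(\rho_1\|\rho_2)=\tfrac12\big[1+\max_{r\ge0}\big((1-2\epsilon)r-N(r)\big)\big]$, so that the whole lemma becomes a comparison between $N$ and $N'$ mediated by a Legendre-type transform.

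The easy implication is (ii)$\Rightarrow$(i): if $N(r)\ge N'(r)$ for every $r\ge0$ then $f_\epsilon(r)\le f'_\epsilon(r)$ pointwise in $r$, and taking the maximum over $r\ge0$ on both sides gives $Q^\epsilon(\rho_1\|\rho_2)\le Q^\epsilon(\rho'_1\|\rho'_2)$ for every $\epsilon\in[0,1]$.

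The substantive direction is (i)$\Rightarrow$(ii), which I would handle by Fenchel (Legendre) duality. Put $s\defeq1-2\epsilon$, so that $\epsilon\in[0,1]$ corresponds to $s\in[-1,1]$, and set $h(s)\defeq\max_{r\ge0}[sr-N(r)]=2Q^\epsilon(\rho_1\|\rho_2)-1$, and likewise $h'$ from the primed pair. Condition (i) then reads exactly $h(s)\le h'(s)$ for all $s\in[-1,1]$. Now $N$ is convex on $[0,\infty)$, being the trace norm of an operator that depends affinely on $r$, with $N(0)=1$; moreover each of its subgradients lies in $[-1,1]$, since at points of differentiability $\frac{d}{dr}N(r)=-\Tr[\rho_1\,\operatorname{sgn}(\rho_2-r\rho_1)]$ with $-\openone\le\operatorname{sgn}(\rho_2-r\rho_1)\le\openone$ and $\Tr[\rho_1]=1$. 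Extending $N$ by $+\infty$ for $r<0$ turns it into a closed proper convex function whose conjugate is exactly $h$, so the Fenchel–Moreau theorem gives $N(r)=\sup_{s}[sr-h(s)]$ for all $r\ge0$, and the subgradient bound guarantees that the supremum is attained at some $s\in[-1,1]$, whence
\[
N(r)=\sup_{s\in[-1,1]}\big[\,sr-h(s)\,\big],\qquad r\ge0.
\]
Feeding $h(s)\le h'(s)$ into this representation yields $N(r)\ge N'(r)$ for every $r\ge0$, which is statement (ii).

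The hard part will be the final step: justifying that the order-reversing biconjugation can be carried out over the restricted parameter range $s\in[-1,1]$ (equivalently $\epsilon\in[0,1]$) without loss of information, i.e. that the family $\{Q^\epsilon\}_{\epsilon\in[0,1]}$ already determines $N(r)$ for all $r\ge0$. This is precisely where the slope bound $\big|\frac{d}{dr}N(r)\big|\le1$ is essential: together with $N(0)=1$ it forces $h$ to be constant (equal to $-1$) on $(-\infty,-1]$ and $+\infty$ on $(1,\infty)$, so the optimizing $s$ in the biconjugate never escapes $[-1,1]$. Checking this confinement carefully—and verifying that the restriction $r\ge0$ in the definition of $h$ does not alter the conjugate on the relevant range of $s$—is the only delicate point; everything else is routine convex analysis.
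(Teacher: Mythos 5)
Your proof is correct, and its skeleton matches the paper's: both first use SDP/Lagrangian duality to rewrite $Q^\epsilon(\rho_1\|\rho_2)=\max_{r\ge0}f_\epsilon(r)$ with $f_\epsilon(r)=\tfrac12\bigl[1+(1-2\epsilon)r-\N{r\rho_1-\rho_2}_1\bigr]$, and both get (ii)$\Rightarrow$(i) by the same pointwise comparison of $f_\epsilon$ and $f'_\epsilon$ followed by maximization over $r$. Where you genuinely diverge is in (i)$\Rightarrow$(ii). The paper argues geometrically: it introduces the minimal maximizers $r_\epsilon$ of $f'_\epsilon$, shows $\N{r_\epsilon\rho_1-\rho_2}_1\ge\N{r_\epsilon\rho'_1-\rho'_2}_1$ at those points, asserts that as $\epsilon$ sweeps $[0,1]$ these points exhaust the locations where $r\mapsto\N{r\rho'_1-\rho'_2}_1$ changes slope, and then claims (without detailed justification) that domination at those points suffices for all $r\ge0$. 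You instead recast the whole statement as an ordering of Fenchel conjugates $h\le h'$ on $s=1-2\epsilon\in[-1,1]$ and recover (ii) by biconjugation, using the observation that $N'(r)=\max_{-\openone\le V\le\openone}\Tr[V(\rho'_2-r\rho'_1)]$ is a supremum of affine functions with slopes in $[-1,1]$, so that the biconjugate sup is confined to $s\in[-1,1]$ and nothing is lost by restricting $\epsilon$ to $[0,1]$. These are two renderings of the same underlying convex-duality fact, but yours buys something concrete: it handles cleanly the case where $N'$ has affine segments (there $N(r)\ge s_0r-h(s_0)\ge s_0r-h'(s_0)=N'(r)$ via the supporting line of slope $s_0$), which is exactly the step the paper's ``sufficient to conclude'' sentence glosses over; it also makes transparent why $\epsilon\in[0,1]$ already determines $N'$ everywhere. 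The only loose end on your side is the appeal to Slater's condition at $\epsilon=0$, where the primal constraint $\Tr[\rho_1A]\ge1$ admits no strictly feasible point; this is harmless (strong duality follows from strict feasibility of the other problem, or by continuity in $\epsilon$), but you should say a word about it rather than invoke Slater uniformly.
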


\begin{proof}	
	Consider the following setting of linear programming. Let $V_1$ and $V_2$ be two (inner product) vector spaces with two cones $K_1\subset V_1$ and $K_2\subset V_2$. Consider two vectors $v_1\in V_1$ and $v_2\in V_2$, and a linear map $\cT:V_1\to V_2$. Given a problem in its primal form:
	\be
	\max_{\substack{x\in K_1\\
			v_2-\cT(x)\in K_2}}\langle v_1,x\rangle_1\;,
	\ee
	the dual form involves the adjoint map $\cT^*:V_2\to V_1$:
	\be\label{eq:dual-form}
	\min_{\substack{y\in K_2\\
			\cT^*(y)-v_1\in K_1}}\langle v_2,y\rangle_2\;,
	\ee
	where $\cT^*$ is defined by the relation $\<y,\cT(x)\>=\<\cT^*(y),x\>$, for all $x\in K_1$ and all $y\in K_2$.
	
	In our case, denote
	\[V_1\defeq \mbb{R}\oplus \op{H}_n=\left\{(r,A)\;\Big|\;r\in\mbb{R}\;\;;\;\;A\in\op{H}_n\right\}\]
	with inner product\[\left\langle (r,A),(t,B)\right\rangle_1\defeq rt+\tr[AB]\;.\]
	Further, define $K_1=\mbb{R}_{+}\oplus\op{M}_{n,+}$ to be the positive cone in $V_1$.
	Similarly, set $V_2=\op{H}_n$ and $K_2=\op{M}_{n,+}$. The linear map $\cT:V_1\to V_2$ is given by:
	\be
	\cT(r,A)=r\rho_1-A\;;
	\ee
	hence, the corresponding dual map $\cT^{*}:V_2\to V_1$ is given by
	\be
	\cT^*(B)=(\tr[\rho_1\; B],-B)\;.
	\ee
	Finally, set $v_1=(1-\epsilon,-\openone_n)$ and $v_2=\rho_2$.
	
	Since, for these choices, $y\in K_2$ if and only if $y\geq 0$ and $\cT^*(y)-v_1\in K_1$ if and only if $\tr[\rho y]\geq 1-\epsilon$ and $y\leq\openone_n$, the dual form~(\ref{eq:dual-form}) becomes exactly the right-hand side of Eq.~(\ref{eq:recall-def}), namely,
	\be
	\min_{\substack{y\in K_2\\
			\cT^*(y)-v_1\in K_1}}\langle v_2,y\rangle_2=Q^\epsilon(\rho_1\|\rho_2)\;.
	\ee
	For the primal form, since $x=(r,A)\in K_1$ if and only if $r\geq 0$ and $A\geq 0$, and $v_2-\cT(x)\in K_1$ if and only if $\rho_2-r\rho_1+A\geq 0$, we obtain
	\be
	\begin{split}
		&\max_{\substack{x\in K_1\\
				v_2-\cT(x)\in K_2}}\langle v_1,x\rangle_1\\
			&=\max_{\substack{A\geq r\rho_1-\rho_2\\
				r,A\geq 0}}\left\{(1-\epsilon)r-\tr[A]\right\}.
	\end{split}
	\ee
	The right-hand side of the above equation can be further simplified as follows. We first fix $r$ and optimize over $A$. Since the $A\ge0$ with minimum trace such that $A\ge r\rho_1-\rho_2$ is exactly $A=(r\rho_1-\rho_2)_+$,
	%For a fixed $r$, decompose
	%\be
	%\rho_2-r\rho_1=(\rho_2-r\rho_1)_+-(\rho_2-r\rho_1)_-
	%\ee
	%such that $(\rho_2-r\rho_1)_+$ and $(\rho_2-r\rho_1)_-$ are both positive semi-definite with orthogonal supports.
	%Let $P^-$ be the projection to the support of $(\rho_2-r\rho_1)_-$. The condition $\rho_2-r\rho_1+A\geq 0$ therefore gives
	%$P^-AP^-\geq(\rho_2-r\rho_1)_-$. Hence, since $A\geq 0$ we conclude that
	%\be
	%\tr[A]\geq\tr[P^-AP^-]\geq\tr(\rho_2-r\rho_1)_-\;.
	%\ee
	%Note that this lower bound on $\tr[A]$ is achievable if we take $A=(\rho_2-r\rho_1)_-$. We therefore
	we conclude that
	\[
	\max_{\substack{x\in K_1\\
			v_2-\cT(x)\in K_2}}\langle v_1,x\rangle_1=\max_{r\geq 0}\left\{(1-\epsilon)r-\tr(r\rho_1-\rho_2)_+\right\},
	\]
	that is,
	\be\label{eq:spoiler1}
	Q^\epsilon(\rho_1\|\rho_2)=\max_{r\geq 0}f_\epsilon(r),
	\ee
	where
	\be\label{eq:spoiler2}
	f_\epsilon(r)\defeq(1-\epsilon)r-\tr(r\rho_1-\rho_2)_+\;.
	\ee
	Note that
	\[
	\tr(r\rho_1-\rho_2)_+=\frac{\N{r\rho_1-\rho_2}_1+r-1}{2}\;,
	\]
	that is,
	\[
	f_\epsilon(r)=\frac{1+(1-2\epsilon)r-\N{r\rho_1-\rho_2}_1}{2}.
	\]
	Therefore, denoting $f'_\epsilon(r)= 2^{-1}\{1+(1-2\epsilon)r-\N{r\rho'_1-\rho'_2}_1\}$, we have that
	\[
	\N{r\rho_1-\rho_2}_1\ge \N{r\rho'_1-\rho'_2}_1\implies f_\epsilon(r)\le f'_\epsilon(r),
	\]
	independently of $r$ and $\epsilon$. We thus have proved that (ii) implies (i).
	
	To show that (i) implies (ii), suppose 
	$Q^\epsilon(\rho_1\|\rho_2)\leq Q^\epsilon(\rho'_1\|\rho'_2)$ for all $\epsilon\in[0,1]$. Let $r_\epsilon\ge 0$ be the minimum value of $r$ achieving $Q^\epsilon(\rho'_1\|\rho'_2)$, in formula,
	\begin{equation}\label{eq:r-eps}
	r_\epsilon\defeq\min\{r\ge0: f'_\epsilon(r)=Q^\epsilon(\rho'_1\|\rho'_2)\}.
	\end{equation}
	In all such points $r_\epsilon$, definition~(\ref{eq:spoiler1}) together with the assumption (i) guarantee that $\N{r_\epsilon\rho_1-\rho_2}_1\ge\N{r_\epsilon\rho'_1-\rho'_2}_1$. This fact can be simply shown by the following chain of inequalities:
	\begin{equation*}
	\begin{split}
		&\frac{1+(1-2\epsilon)r_\epsilon-\N{r_\epsilon\rho'_1-\rho_2'}_1}{2}\\
		&=Q^\epsilon(\rho'_1\|\rho'_2)\\
		&\ge Q^\epsilon(\rho_1\|\rho_2)\\
		&=\max_r\frac{1+(1-2\epsilon)r-\N{r\rho_1-\rho_2}_1}{2}\\
		&\ge \frac{1+(1-2\epsilon)r_\epsilon-\N{r_\epsilon\rho_1-\rho_2}_1}{2}.
	\end{split}
	\end{equation*}
	The crucial observation now is that the points $r_\epsilon$, representing the solutions of~(\ref{eq:r-eps}) for varying $\epsilon\in[0,1]$, coincide with the points where the quantity $\N{r\rho'_1-\rho'_2}_1$, thought as a function of $r$, changes its slope (see Fig.~\ref{fig:plot} below). For example, for $\epsilon=0$, we have to consider the function
	\[
	f'_0(r)=\frac{1+r-\N{r\rho'_1-\rho'_2}_1}{2},
	\]
	and this achieves its maximum value 1 for $r\ge r^*\equiv r_0=\sup(\rho'_2/\rho'_1)$. But then, if we know that the curve $\N{r\rho_1-\rho_2}_1$ is not below $\N{r\rho'_1-\rho'_2}_1$ in all the points where the latter changes its slope, this is sufficient to conclude that
	\[
	\N{r\rho'_1-\rho'_2}_1\le \N{r\rho_1-\rho_2}_1,\quad\forall r\ge 0.
	\]
	This completes the proof of the lemma.
\end{proof}
	\begin{figure}[htb]
		\centering
		\includegraphics[width=0.9\linewidth]{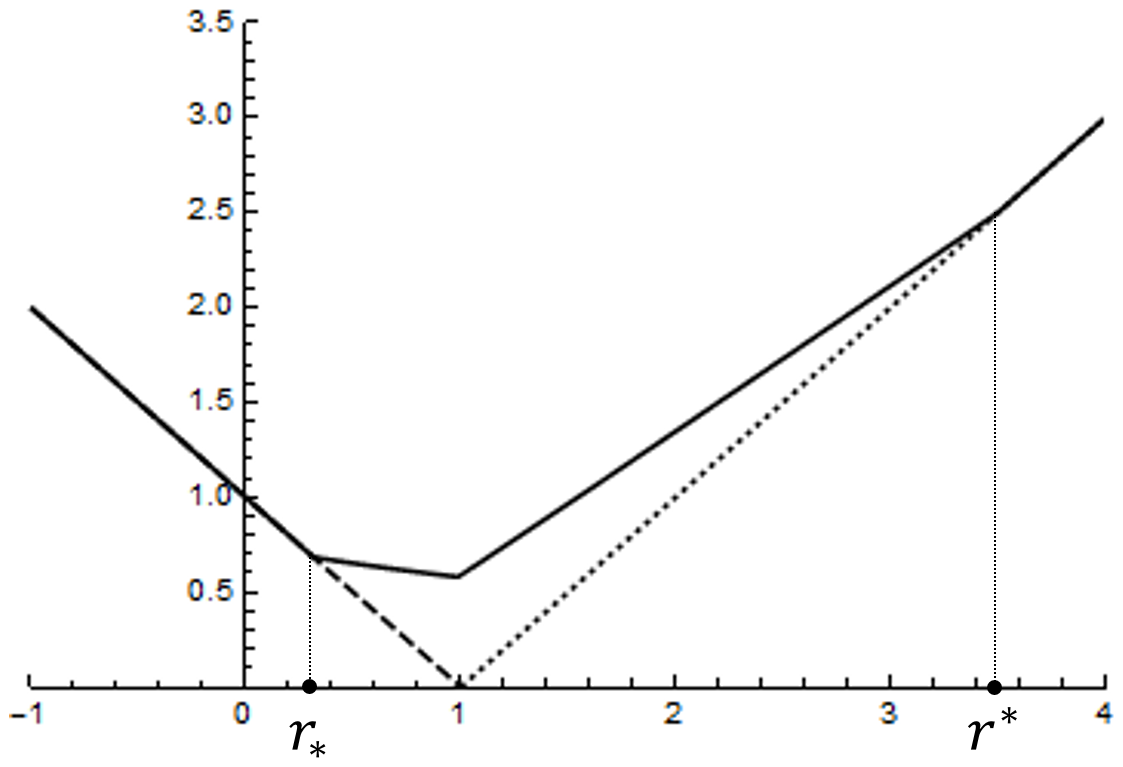}
		\caption{Typical behavior of $\N{r\rho_1-\rho_2}_1$, for two random density matrices $\rho_1$ and $\rho_2$ on $\mathbb{C}^3$, as a function of $r\in\mathbb{R}$ (continuous line). For $r\le r_*\equiv\inf(\rho_2/\rho_1)=\sup\{\lambda:\lambda\rho_1-\rho_2\le 0 \}$, the curve becomes equal to $1-r$ (dashed line). For $r\ge r^*\equiv\sup(\rho_2/\rho_1)=\inf\{\lambda : \lambda\rho_1-\rho_2\ge 0\}$, the curve becomes equal to $r-1$ (dotted line).}
		\label{fig:plot}
	\end{figure}

%\subsection{Limiting cases}
%As a sanity check we first consider the two extreme limits $\epsilon=0$ and $\epsilon=1$.
%
%For $\epsilon=1$ we have
%\be
%f(r)=-\tr(r\rho-\sigma)_+\leq 0
%\ee 
%with equality if $r=0$. Therefore, for $\epsilon=1$, $Q^\epsilon(\rho\|\sigma)=0$ as expected.
%
%If $\epsilon=0$ then
%\be
%f(r)=r-\tr(r\rho-\sigma)_+=r-\max_{\Pi}\tr\left[(r\rho-\sigma)\Pi\right]
%=\min_{\Pi}\left\{\tr\left(\sigma \Pi\right)+r\left[1-\tr\left(\rho \Pi\right)\right]\right\}
%\ee
%where the min and max are over all projections $\Pi$.
%Clearly, this is an increasing function of $r$ with $\lim_{r\to\infty}f(r)=\tr\left(\sigma \Pi_\rho\right)$ with $\Pi_\rho$ the projection to the support of $\rho$. As expected we get
%\be
%Q^{\epsilon=0}(\rho\|\sigma)=\tr\left(\sigma \Pi_\rho\right)\;.
%\ee

\subsubsection{The classical case}

As a ``consistency check'' we separately consider the classical case here. Suppose $\rho$ and $\sigma$ are both diagonal with elements
$p_1,...,p_n$ and $q_1,...,q_n$, respectively. Denote $r_j\equiv q_j/p_j$ if $p_j>0$ and otherwise $r_j=0$. W.l.o.g. suppose
$r_1\leq r_2\leq\cdots\leq r_n$. We therefore get
\be
rp_j-q_j> 0\;\;\iff\;\;r> r_j
\ee
Hence, for $r\in(r_{k},r_{k+1}]$
\be
f(r)=(1-\epsilon)r-\tr(r\rho-\sigma)_+=(1-\epsilon)r-r\sum_{j=1}^{k}p_j+\sum_{j=1}^{k}q_j
\ee
Due to the linearity in $r$ of the expression above 
we conclude that in the classical case
\begin{align*}
Q^\epsilon(\rho\|\sigma)&=\max_{k\in\{1,...,n\}}f(r_k)\\
&=\max_{k\in\{1,...,n\}}\left\{\left(1-\epsilon-\sum_{j=1}^{k}p_j\right)r_k+\sum_{j=1}^{k}q_j\right\},
\end{align*}
thus reconstructing the Blackwell criterion for pairs of probability distributions (including majorization and thermomajorization).

\section{applications}

In this section, we study how the conditions in Theorem~\ref{theo:1} are logically related to the existence of a suitable transformation mapping $(\rho_1,\rho_2)$ into $(\rho_1',\rho'_2)$.

\subsection{Coherent energy transitions with Gibbs-preserving operations}

We consider here the resource theory of 
athermality~\cite{horodecki_fundamental_2013,brandao_resource_2013}. In this theory, quantum systems that are not in thermal equilibrium with their environment are considered resources (e.g., work can be extracted from such systems). Hence, free systems are those prepared in the Gibbs state, i.e., $\gamma=\mathcal{Z}^{-1}\sum_{x=1}^{d}e^{-\beta E_x}|x\>\<x|$, and permitted operations must preserve $\gamma$. Consider now two possibly non-commuting quantum states $\rho$ and $\sigma$ both with same 2-dimensional support spanned by the same two energy eigenstates, say $|x\>$ and $|y\>$. Such states form the building blocks of quantum thermodynamics as they contain the smallest units of athermality~\cite{brandao_resource_2013}. Here we find both necessary and sufficient conditions under which a Gibbs-preserving transition between $\rho$ and $\sigma$ is possible. We call  such transitions {\it coherent energy transitions}, since not only the transitions $|x\>\to|y\>$ and $|y\>\to|x\>$ are considered, but also transitions between any linear superpositions of such energy eigenstates.

The main result about coherent energy transitions is the following:
\begin{theorem}\label{theo:qubit-gibbs}
	With $\gamma$, $\rho$, and $\sigma$ as above, assuming that $\gamma>0$ (i.e., non-zero temperature), the following are equivalent:
	\begin{enumerate}[label=\roman*)]
		\item $\rho$ can be transformed into $\sigma$ by a $\gamma$-preserving CPTP operation (i.e., Gibbs-preserving operation);
		\item $(\rho,\gamma)\succ(\sigma,\gamma)$;
		\item \begin{equation}\label{cond-simple}
		\begin{cases}
		D_{\max}(\rho\|\gamma)\geq D_{\max}(\sigma\|\gamma),\\
		D_{\max}(\gamma\|\rho)\geq D_{\max}(\gamma\|\sigma).
		\end{cases}
		\end{equation}
	\end{enumerate} 	
\end{theorem}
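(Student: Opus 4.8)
The plan is to prove the equivalences through the cycle (i)$\Rightarrow$(ii)$\Rightarrow$(iii)$\Rightarrow$(i), with the two ``easy'' implications following from the general machinery of Theorem~\ref{theo:1} and two genuinely dimension-dependent steps carrying the real content. For (i)$\Rightarrow$(ii) I would invoke data processing directly: if $\Phi$ is a $\gamma$-preserving CPTP map with $\Phi(\rho)=\sigma$, then for any effect $0\le E\le\openone$ on the output, $\Tr[E\sigma]=\Tr[\Phi^\dagger(E)\rho]$ and $\Tr[E\gamma]=\Tr[\Phi^\dagger(E)\gamma]$, and since $\Phi^\dagger$ is positive and unital, $E'\defeq\Phi^\dagger(E)$ is again a valid effect; hence every point of $\set{T}(\sigma,\gamma)$ lies in $\set{T}(\rho,\gamma)$, i.e.\ $(\rho,\gamma)\succ(\sigma,\gamma)$. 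For (ii)$\Rightarrow$(iii) I would apply the equivalence (i)$\Leftrightarrow$(iii) of Theorem~\ref{theo:1} and retain only the endpoint $\alpha\to\infty$, using $H_\infty(\rho\|\gamma)=D_{\max}(\rho\|\gamma)$ and $H_\infty(\gamma\|\rho)=D_{\max}(\gamma\|\rho)$ from property~(iii) of Theorem~\ref{prop:1}; the two surviving inequalities are exactly~\eqref{cond-simple}.

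The first substantial step is (iii)$\Rightarrow$(ii), and this is where $n=2$ enters decisively. By Lemma~\ref{lem:separation}(iv) it suffices to show $\N{t\gamma-\rho}_1\ge\N{t\gamma-\sigma}_1$ for all $t\ge0$. The conditions in~\eqref{cond-simple} say precisely that $\sup(\rho/\gamma)\ge\sup(\sigma/\gamma)$ and $\inf(\rho/\gamma)\le\inf(\sigma/\gamma)$, i.e.\ that the interval bounded by the extreme generalized eigenvalues of $(\sigma,\gamma)$ is contained in that of $(\rho,\gamma)$. For a qubit, $t\gamma-\rho$ is indefinite exactly on this interval, where $\N{t\gamma-\rho}_1=\sqrt{(t-1)^2-4\det(t\gamma-\rho)}$; expanding, the radicand is a quadratic in $t$ whose \emph{leading coefficient equals $1-4\det\gamma$ and hence depends only on the common reference $\gamma$}. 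Therefore the difference of the two \emph{squared} trace-norm curves is affine in $t$, so it is nonnegative on the $\sigma$-interval as soon as it is nonnegative at the two endpoints; there $\N{t\gamma-\sigma}_1=|t-1|$, while the containment of intervals guarantees $\N{t\gamma-\rho}_1\ge|t-1|$. Outside the $\sigma$-interval $\N{t\gamma-\sigma}_1=|t-1|\le\N{t\gamma-\rho}_1$ holds trivially, and the two cases combine to the desired inequality. It is precisely the cancellation of the leading $t^2$ term that fails for $n\ge3$, where the curve acquires additional kinks and is no longer controlled by its extreme eigenvalues alone.

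The remaining implication, (ii)$\Rightarrow$(i) (equivalently (iii)$\Rightarrow$(i)), is the genuine obstacle, since it demands an actual $\gamma$-preserving channel rather than an abstract ordering. Here I would pass to the Bloch picture: a $\gamma$-preserving qubit channel is an affine contraction of the Bloch ball fixing the point representing $\gamma$, and the task is to produce such a map sending the Bloch vector of $\rho$ to that of $\sigma$ while remaining completely positive. My plan is to build a small family of Gibbs-preserving channels---for instance, composing a contraction toward $\gamma$ of the replacer type $p\,\id+(1-p)\Tr[\,\cdot\,]\gamma$ (which shrinks the generalized-eigenvalue interval toward $1$) with an energy-basis phase/rotation adjusting the coherence---and to tune the parameters so that the image of $\rho$ equals $\sigma$, falling back on the general parametrization of affine maps fixing $\gamma$ if this family proves too rigid. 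The hard part will be verifying complete positivity of the resulting map directly from the two scalar inequalities~\eqref{cond-simple}; I expect this to reduce to a single determinant (Choi-positivity) condition, for which the containment $[\inf(\sigma/\gamma),\sup(\sigma/\gamma)]\subseteq[\inf(\rho/\gamma),\sup(\rho/\gamma)]$ should be exactly strong enough.
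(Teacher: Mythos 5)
Your implications (i)$\Rightarrow$(ii) and (ii)$\Rightarrow$(iii) are fine and match the paper, and your (iii)$\Rightarrow$(ii) argument is essentially the paper's own key computation in disguise: the paper works with $\det(\sigma-t\gamma^{(2)})-\det(\rho-t\gamma^{(2)})$, observes that the $\det(\gamma^{(2)})\,t^2$ terms cancel because the reference state is common, and checks the two endpoints $t=m(\sigma,\gamma^{(2)})$ and $t=M(\sigma,\gamma^{(2)})$, where the expression factors as a product of differences of the extreme generalized eigenvalues; your ``difference of squared trace norms is affine in $t$'' is the same cancellation, since on the indefinite interval $\N{t\gamma^{(2)}-\rho}_1^2=(t-1)^2-4\det(t\gamma^{(2)}-\rho)$. (One housekeeping item you skip: $\gamma$ is $d$-dimensional while $\rho,\sigma$ have two-dimensional support, so before any $2\times2$ determinant formula applies you need the reduction $\N{\rho-t\gamma}_1=\N{\rho-t\tfrac{r}{\mathcal{Z}}\gamma^{(2)}}_1+t\N{\gamma-\tfrac{r}{\mathcal{Z}}\gamma^{(2)}}_1$ and the observation that the $D_{\max}$ conditions for $\gamma$ and for $\gamma^{(2)}$ coincide; the paper proves this as a separate lemma.)

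The genuine gap is the implication you correctly identify as ``the genuine obstacle'': producing the Gibbs-preserving CPTP map. Your plan---parametrize affine Bloch-ball maps fixing $\gamma$, tune a replacer-plus-rotation family, and hope Choi positivity reduces to the two scalar inequalities---is not a proof, and the specific family you name is too rigid (the replacer $p\,\id+(1-p)\Tr[\cdot]\gamma$ rescales $M$ and $m$ toward $1$ by the \emph{same} $p$, while a $\gamma$-commuting phase leaves them unchanged, so targets with, say, $M(\sigma,\gamma)=M(\rho,\gamma)$ but $m(\sigma,\gamma)>m(\rho,\gamma)$ are unreachable within it). The paper avoids this construction entirely by invoking the Alberti--Uhlmann theorem: for qubit pairs, a CPTP map with $\Phi(\rho)=\sigma$, $\Phi(\tau)=\eta$ exists if and only if $M(\rho,\tau)\ge M(\sigma,\eta)\ge m(\sigma,\eta)\ge m(\rho,\tau)$ and $\det(\sigma-t\eta)\ge\det(\rho-t\tau)$ on $[m(\sigma,\eta),M(\sigma,\eta)]$ (equivalently, the trace-norm ordering for all $t\ge0$). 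These hypotheses are exactly what your (iii)$\Rightarrow$(ii) step establishes, so citing Alberti--Uhlmann would close your argument immediately; without it, or without actually carrying out the Choi-positivity verification you defer, the hard direction of the theorem remains unproved.
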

In other words, in this case, we do not need to check the validity of point~(iii) of Theorem~\ref{theo:1} for all values of $\alpha$, but only in the limit $\alpha\to\infty$. Moreover, in this case, we know that a CPTP map between the two pairs of states exists. In the case of zero temperature, i.e., if $\gamma\ge 0$, a third condition has to be added to the above list, namely, $D_{\min}(\gamma\|\rho)\geq D_{\min}(\gamma\|\sigma)$,
where $D_{\min}(\gamma\|\rho)=-\log\Tr[\Pi_\gamma\ \rho]$ denotes the \textit{min-relative entropy}~\cite{datta_min-_2009} and $\Pi_\gamma$ is the projector onto the support of $\gamma$. Finally, we did not include the condition $D_{\min}(\rho\|\gamma)\geq D_{\min}(\sigma\|\gamma)$ since it is trivial, unless $\sigma$ is rank-one (i.e., a pure state). However, as shown below in the proof, it turns out that in this case the other conditions implies this one.

Theorem~\ref{theo:qubit-gibbs} generalizes an earlier work given in~\cite{Varun2015} to the generic case in which the Gibbs state is not pure. It demonstrates that three athermality monotones (given in terms of the min/max relative entropies) provide both necessary and sufficient conditions for the existence of a Gibbs preserving map connecting two non-thermal states with the same two-dimensional support. Since the set of Gibbs-preserving operations is strictly larger than the set of thermal-operations~\cite{brandao_resource_2013}, these three monotones, in general, will not be sufficient to determine convertibility under thermal operations~\cite{Hor15}.

The Gibbs state is given by
\be\label{g}
\gamma=\frac{1}{\mathcal{Z}}\sum_{x=1}^{d}\exp(-\beta E_x)|x\rangle\langle x|,
\ee
where $\beta=1/kT$ is the inverse temperature, $d$ is the dimension of the quantum system, $\{|x\rangle\}_{x=1}^{d}$ are the complete set of eigenstates of the Hamiltonian, $E_x$ the eigenvalues of the Hamiltonian, and $\mathcal{Z}$ is the partition function $\sum_{x=1}^{d}\exp(-\beta E_x)$. Consider now two quantum states $\rho$ and $\sigma$ both with the same 2-dimensional support given by, e.g.,
\be\label{rank2}
{\rm supp}(\rho)={\rm supp}(\sigma)={\rm span}\left\{|x\rangle\right\}_{x=1,2},
\ee
but it does not matter which two energy eigenstates are chosen (with the condition $E_2\ge E_1$).
Denote further by $\gamma^{(2)}$ the Gibbs state projected onto this two dimensional subspace: 
\be\label{g2}
\gamma^{(2)}\equiv p|1\rangle\langle 1|+(1-p)|2\rangle\langle 2|
\ee
where
\be
p\equiv \frac{\exp(-\beta E_1)}{\exp(-\beta E_1)+\exp(-\beta E_2)}=\frac{1}{1+\exp(-\beta\Delta E)}\;,
\ee
with $\Delta E=E_2-E_1\geq 0$ so that $p\geq 1/2$. We start with the following lemma:
  \begin{lemma}
  Let $\rho$ and $\sigma$ be as in~\eqref{rank2}, and let $\gamma$ be as in~\eqref{g} with $\gamma^{(2)}$ as in~\eqref{g2}. Then,
  there exists a CPTP map $\Phi$ such that $\Phi(\rho)=\sigma$ and $\Phi(\gamma)=\gamma$ if and only if there exists a CPTP map $\mE$ such that $\mE(\rho)=\sigma$ and $\mE(\gamma^{(2)})=\gamma^{(2)}$.
  \end{lemma}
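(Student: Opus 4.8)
The plan is to treat the two implications separately, exploiting the orthogonal decomposition of $\mathbb{C}^d$ induced by the common $2$-dimensional support. Write $P$ for the projector onto $V=\mathrm{span}\{|1\rangle,|2\rangle\}$ and $Q=\openone-P$ for its complement. Since $\gamma$ is diagonal in the energy eigenbasis and both $V$ and its complement are spanned by energy eigenstates, $\gamma$ is block diagonal, $\gamma=P\gamma P+Q\gamma Q$, with $P\gamma P=w\,\gamma^{(2)}$ (identifying an operator on $V$ with its zero-padded embedding) where $w\defeq\Tr[P\gamma]$, and $\gamma_Q\defeq Q\gamma Q$. I write $\iota$ for the embedding of operators on $V$ into operators on $\mathbb{C}^d$. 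Note that $\rho$ and $\sigma$, being supported on $V$, satisfy $P\rho P=\rho$, $Q\rho Q=0$, and likewise for $\sigma$.

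For the easy direction $\mE\Rightarrow\Phi$, I would lift the qubit map by letting it act on the $V$-block and the identity act on the complementary block, after discarding the off-diagonal blocks:
\[
\Phi(\tau)\defeq \iota\big(\mE(P\tau P)\big)+Q\tau Q .
\]
This is completely positive as a sum of CP maps, and trace preserving because $\mE$ is and $\Tr[P\tau P]+\Tr[Q\tau Q]=\Tr[\tau]$. Evaluating on $\rho$ annihilates the complementary block and gives $\mE(\rho)=\sigma$; evaluating on $\gamma$ gives $\iota\big(\mE(w\gamma^{(2)})\big)+\gamma_Q=w\gamma^{(2)}+\gamma_Q=\gamma$, so $\Phi$ is a Gibbs-preserving CPTP map sending $\rho$ to $\sigma$.

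The substance of the lemma is the converse $\Phi\Rightarrow\mE$, and the obstacle is that compressing $\Phi$ back to $V$ leaks trace into the complement, so the naive compression fails to preserve $\gamma^{(2)}$. I would start from $\mE_0(\omega)\defeq P\,\Phi(\iota(\omega))\,P$, which is CP. Because $\Phi(\rho)=\sigma$ is already supported on $V$, nothing leaks on $\rho$ and $\mE_0(\rho)=\sigma$. For $\gamma^{(2)}$, using $\iota(\gamma^{(2)})=\tfrac1w(\gamma-\gamma_Q)$ together with $\Phi(\gamma)=\gamma$ yields
\[
\mE_0(\gamma^{(2)})=\gamma^{(2)}-C,\qquad C\defeq\tfrac1w\,P\,\Phi(\gamma_Q)\,P\ge0,
\]
so $\mE_0$ undershoots the Gibbs state by the positive operator $C$.

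To repair this I would add a measure-and-reprepare term that reinjects exactly the lost trace in the shape of $C$:
\[
\mE(\omega)\defeq P\,\Phi(\iota(\omega))\,P+\Tr[L\,\omega]\,\theta,\qquad L\defeq P\,\Phi^*(Q)\,P\ge0,\ \ \theta\defeq C/\Tr[C]
\]
(with $\theta$ arbitrary when $\Tr[C]=0$, in which case $C=0$ and no correction is needed). The functional $\omega\mapsto\Tr[L\,\omega]=\Tr[Q\,\Phi(\iota(\omega))\,Q]$ records precisely the trace lost under compression, so $\mE$ is trace preserving, and it is CP as a sum of a CP map and a positive-functional-times-state term. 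The key identity, which is exactly where trace preservation of $\Phi$ is used, is that the weight leaked on $\gamma^{(2)}$ equals $\Tr[C]$: from $\Tr[Q\gamma]=\Tr[\gamma_Q]=\Tr[\Phi(\gamma_Q)]=\Tr[P\Phi(\gamma_Q)]+\Tr[Q\Phi(\gamma_Q)]$ one gets $\Tr[L\,\gamma^{(2)}]=\Tr[C]$. Hence $\mE(\gamma^{(2)})=\gamma^{(2)}-C+\Tr[C]\,\theta=\gamma^{(2)}$, while on $\rho$ the correction vanishes and $\mE(\rho)=\sigma$. The main difficulty is therefore isolated in matching the leaked trace to the deficit $C$ exactly; once that identity is in hand, both the Gibbs-preservation and the $\rho\mapsto\sigma$ constraint follow at once.
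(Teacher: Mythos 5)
Your proof is correct, but for the nontrivial direction ($\Phi\Rightarrow\mE$) it takes a genuinely different route from the paper. The paper argues non-constructively: it uses contractivity of the trace norm to get $\N{\sigma-t\gamma}_1\le\N{\rho-t\gamma}_1$ for all $t>0$, exploits the block-diagonal structure of $\gamma$ to split each trace norm as $\N{\sigma-t'\gamma^{(2)}}_1+t\N{\gamma-\tfrac{r}{\mathcal{Z}}\gamma^{(2)}}_1$ (and likewise for $\rho$), cancels the common term, and then invokes the Alberti--Uhlmann theorem to infer the existence of a qubit channel $\mE$. You instead build $\mE$ explicitly from $\Phi$: compress to the two-dimensional block, observe that the compression already maps $\rho$ to $\sigma$ exactly and undershoots $\gamma^{(2)}$ by the positive operator $C=\tfrac1w P\Phi(\gamma_Q)P$, and then reinject the leaked weight via the measure-and-prepare term $\Tr[L\,\omega]\,\theta$ with $L=P\Phi^*(Q)P$ and $\theta=C/\Tr[C]$; the key identity $\Tr[L\,\gamma^{(2)}]=\Tr[C]$ follows from trace preservation of $\Phi$ together with $\Phi(\gamma)=\gamma$, and I have checked it holds. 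Your construction is more elementary (no Alberti--Uhlmann needed for this lemma, only for the subsequent characterization in terms of $m$ and $M$), it is explicit, and it works verbatim for a common support of any dimension, not just two. What the paper's route buys is that the intermediate trace-norm inequalities $\N{\sigma-t'\gamma^{(2)}}_1\le\N{\rho-t'\gamma^{(2)}}_1$ are exactly the conditions reused in the rest of the proof of Theorem~\ref{theo:qubit-gibbs}, so the reduction and the characterization come out of the same computation. Two minor points to make fully explicit in a write-up: $w=\Tr[P\gamma]>0$ (guaranteed by the form of $\gamma$ in Eq.~\eqref{g}), and $\theta$ is a legitimate density matrix on the two-dimensional block because $C\ge0$ is supported there.
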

  
  \begin{proof}
  Suppose there exists $\Phi$ such that $\Phi(\rho)=\sigma$ and $\Phi(\gamma)=\gamma$. Then, for all
  $t>0$ we have
  \be\label{AU}
  \N{\sigma-t\gamma}_1=\N{\Phi(\rho)-\Phi(\gamma)}_1\leq \N{\rho-t\gamma}_1\;,
  \ee
  since the trace norm is contractive. 
  Next, denoting by $P$ the projection onto $\text{span}\{|1\>,|2\>\}$ and  $r\equiv \exp(-\beta E_1)+\exp(-\beta E_2)$, we have
\[
\begin{split}
  \N{\rho-t\gamma}_1&=\N{P\rho P-tP\gamma P}_1+t\N{(I-P)\gamma(I-P)}_1\\
  &=\N{\rho-t\frac{r}{\mathcal{Z}}\gamma^{(2)}}_1+t\N{\gamma-\frac{r}{\mathcal{Z}}\gamma^{(2)}}_1,
\end{split}
\]
and, analogously, 
\[
\N{\sigma-t\gamma}_1=\N{\sigma-t\frac{r}{\mathcal{Z}}\gamma^{(2)}}_1+t\N{\gamma-\frac{r}{\mathcal{Z}}\gamma^{(2)}}_1.
\]  
%  
%   and note that
%  \begin{align}
%  & \N{\rho-t\gamma}_1=\N{\rho-rt\gamma^{(2)}}_1+t\N{\gamma-r\gamma^{(2)}}_1\nonumber\\
%  & \N{\sigma-t\gamma}_1=\N{\sigma-rt\gamma^{(2)}}_1+t\N{\gamma-r\gamma^{(2)}}_1\nonumber\\
%  \end{align}
  Therefore, since $r/\mathcal{Z}>0$, we can introduce the new parameter $t'\defeq t\cdot\frac{r}{\mathcal{Z}}$ so that the inequality in Eq.~\eqref{AU} can be rewritten as
  \be
  \N{\sigma-t'\gamma^{(2)}}_1\leq \N{\rho-t'\gamma^{(2)}}_1\;\;\;\forall\;t'>0\;.
  \ee
From the Alberti--Uhlmann result on qubits~\cite{alberti_problem_1980} there exists $\mE$ as in the lemma.
  
  Conversely, suppose there exists a CPTP map $\mE$ such that $\mE(\rho)=\sigma$ 
  and $\mE(\gamma^{(2)})=\gamma^{(2)}$. Then, define $\Phi$ as follows. Let $P=|1\ra\la 1|+|2\ra\la 2|$ be the projector onto the support of $\rho$ and $\sigma$, and define
  \be
  \Phi(\cdot):=\mE\left(P(\cdot) P\right)+(I_d-P)(\cdot)(I_d-P)\;.
  \ee
  By construction, $\Phi$ is CPTP since $\mE$ is CPTP, and it is easy to verify that $\Phi(\rho)=\sigma$ and $\Phi(\gamma)=\gamma$. This completes the proof.
  \end{proof}

\begin{lemma}
	Let $\rho$ and $\sigma$ be two qubit density matrices and let $\gamma^{(2)}$ be the Gibbs state given in~\eqref{g2}. 
	Then, $\rho$ can be converted to $\sigma$ by Gibbs preserving operations if and only if the following three inequalities simultaneously hold:
	\begin{equation}\label{cond}
	\begin{cases}
	D_{\max}(\rho\|\gamma^{(2)})\geq D_{\max}(\sigma\|\gamma^{(2)}),\\
	D_{\max}(\gamma^{(2)}\|\rho)\geq D_{\max}(\gamma^{(2)}\|\sigma),\\
	D_{\min}(\gamma^{(2)}\|\rho)\geq D_{\min}(\gamma^{(2)}\|\sigma).
	\end{cases}
	\end{equation}
\end{lemma}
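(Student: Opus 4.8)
The plan is to collapse the convertibility question to a one-parameter family of trace-norm inequalities, and then exploit the special two-dimensional geometry to show that this entire family reduces to the three monotone inequalities in~\eqref{cond}.

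First I would invoke the Alberti--Uhlmann theorem for qubits~\cite{alberti_problem_1980}, exactly as in the previous lemma: there exists a CPTP map $\mE$ with $\mE(\rho)=\sigma$ and $\mE(\gamma^{(2)})=\gamma^{(2)}$ (i.e. a Gibbs-preserving conversion of $\rho$ into $\sigma$) if and only if
\be\label{eq:AU-red}
\N{\rho-t\gamma^{(2)}}_1\ge\N{\sigma-t\gamma^{(2)}}_1\qquad\text{for all }t\ge0.
\ee
This already disposes of the easy (``only if'') direction: if such an $\mE$ exists, then data-processing for $D_{\max}$ and $D_{\min}$ under the trace-preserving map $\mE$ that fixes $\gamma^{(2)}$ gives $D_{\max}(\rho\|\gamma^{(2)})\ge D_{\max}(\sigma\|\gamma^{(2)})$, and likewise for the two reversed quantities, yielding~\eqref{cond}. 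The real work is the converse, deducing~\eqref{eq:AU-red} from~\eqref{cond}.

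Then I would pass to the Bloch representation, writing $\gamma^{(2)}=\tfrac12(\openone+s\,\sigma_z)$ with $s=2p-1\ge0$ (so $\gamma^{(2)}$ points along $+z$ and $|1\rangle$ is its $+z$ eigenstate), $\rho=\tfrac12(\openone+\vec r\cdot\vec\sigma)$ and $\sigma=\tfrac12(\openone+\vec q\cdot\vec\sigma)$. Since the trace norm of any $2\times2$ Hermitian operator $\tfrac12(a\openone+\vec m\cdot\vec\sigma)$ equals $\max(|a|,|\vec m|)$, one gets
\be\label{eq:bloch-tn}
\N{\rho-t\gamma^{(2)}}_1=\max\big(|1-t|,\,|\vec r-ts\hat z|\big),
\ee
and similarly for $\sigma$. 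Writing $H_\rho(t)\defeq|\vec r-ts\hat z|$ and $H_\sigma(t)\defeq|\vec q-ts\hat z|$, inequality~\eqref{eq:AU-red} reads $\max(|1-t|,H_\rho(t))\ge H_\sigma(t)$, which is automatic except on the interval (possibly unbounded) where $H_\sigma(t)>|1-t|$, i.e. where $\sigma$ has a genuine ``bump''. The decisive computation is that the quadratic terms in $t$ cancel, so that
\be\label{eq:affine}
H_\rho(t)^2-H_\sigma(t)^2=\big(|\vec r|^2-|\vec q|^2\big)-2s\,(r_z-q_z)\,t
\ee
is \emph{affine} in $t$; hence $H_\rho\ge H_\sigma$ on an interval as soon as~\eqref{eq:affine} is nonnegative at its endpoints.

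Finally I would identify these endpoints with the three monotones. A direct computation matches $\sup(\rho/\gamma^{(2)})=2^{D_{\max}(\rho\|\gamma^{(2)})}$ with the right crossing point $t_2^\rho$ (where $H_\rho(t)=t-1$) and $\inf(\rho/\gamma^{(2)})=2^{-D_{\max}(\gamma^{(2)}\|\rho)}$ with the left crossing point $t_1^\rho$ (where $H_\rho(t)=1-t$), so the first two lines of~\eqref{cond} say exactly that $\rho$'s bump interval $(t_1^\rho,t_2^\rho)$ contains $\sigma$'s interval $(t_1^\sigma,t_2^\sigma)$. Consequently $t_1^\sigma,t_2^\sigma$ lie inside $\rho$'s bump, where $H_\rho>|1-t|=H_\sigma$, making~\eqref{eq:affine} positive at both endpoints and hence nonnegative throughout, which with~\eqref{eq:bloch-tn} gives~\eqref{eq:AU-red}. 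I expect the main obstacle to be the zero-temperature/rank-deficient regime $s=1$ (where $\gamma^{(2)}=|1\rangle\langle1|$): there $t_2^\sigma\to+\infty$ and $D_{\max}(\sigma\|\gamma^{(2)})$ becomes infinite, so the first inequality degenerates and one must instead control~\eqref{eq:affine} as $t\to\infty$. Its slope $-2s(r_z-q_z)$ is governed precisely by $r_z=2\langle1|\rho|1\rangle-1$, i.e. by $D_{\min}(\gamma^{(2)}\|\rho)=-\log_2\langle1|\rho|1\rangle$, so the third inequality in~\eqref{cond} is exactly what forces~\eqref{eq:affine} to stay nonnegative at infinity. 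The only remaining care is the boundary cases in which $\rho$ or $\sigma$ is pure, which I would handle by continuity or direct inspection.
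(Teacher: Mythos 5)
Your proposal is correct and follows essentially the same route as the paper: the Alberti--Uhlmann reduction, the observation that the difference of the two relevant quadratics in $t$ is affine (your $H_\rho(t)^2-H_\sigma(t)^2$ is, up to a factor $-4$, exactly the paper's $\det(\sigma-t\gamma^{(2)})-\det(\rho-t\gamma^{(2)})$), the endpoint check at $m(\sigma,\gamma^{(2)})$ and $M(\sigma,\gamma^{(2)})$ identified with the two $D_{\max}$ conditions, and the separate rank-one treatment where the slope at infinity is controlled by $D_{\min}$. The Bloch-vector language is only a cosmetic repackaging of the paper's determinant computation, so no further comparison is needed.
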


Before proceeding, we notice that, while in Eq.~(\ref{cond}) above the projected Gibbs state $\gamma^{(2)}$ appears, in Eq.~(\ref{cond-simple}) of Theorem~\ref{theo:qubit-gibbs} we use the original $\gamma$. However, since $P\rho P=\rho$, $P\sigma P=\sigma$, and $P\gamma P=c\gamma^{(2)}$ (for some $c\ge0$), and since both $D_{\max}$ and $D_{\min}$ in this case only depend on what there is on the support of $P$, the two set of conditions are clearly equivalent.

	Denote by
	\begin{align}
	& m(\rho,\gamma^{(2)})\defeq\inf(\rho/\gamma^{(2)})=\sup\{t\in\mbb{R}\;:\;t\gamma^{(2)}-\rho\leq 0\}\nonumber\\
	& M(\rho,\gamma^{(2)})\defeq\sup(\rho/\gamma^{(2)})=\inf\{t\in\mbb{R}\;:\;t\gamma^{(2)}-\rho\geq 0\}\;.
	\end{align}
	Note that $m(\rho,\gamma^{(2)})\leq 1\leq M(\rho,\gamma^{(2)})$. Since we consider here the qubit case, it follows that 
	$m(\rho,\gamma^{(2)})$ and $M(\rho,\gamma^{(2)})$ are the roots to the quadratic polynomial Det$(\rho-t\gamma^{(2)})$.
	A straightforward calculation gives (assuming $\det(\gamma^{(2)})>0$)
	\be
	\det(\rho-t\gamma^{(2)})=\det(\gamma^{(2)})\left[t-m(\rho,\gamma^{(2)})\right]\left[t-M(\rho,\gamma^{(2)})\right]
	\ee
	with $m$ and $M$ given explicitly below after we introduce a few notations.
	
	Without loss of generality we can assume that the off-diagonal terms of $\rho$ are non-negative real numbers since $\gamma^{(2)}$ is invariant under conjugation by any $2\times2$ unitary matrix which is diagonal on the energy eigenbasis (i.e., commutes with $\gamma^{(2)}$). Hence, we can write
	\be\label{rho}
	\rho=\begin{pmatrix}a & \varepsilon\sqrt{a(1-a)}\\ \varepsilon\sqrt{a(1-a)} & 1-a\end{pmatrix}
	\ee 
	with $\varepsilon,a\in[0,1]$.
	Taking $\gamma^{(2)}$ as in~\eqref{g2} we get the following explicit expressions for
	$m(\rho,\gamma^{(2)})$ and $M(\rho,\gamma^{(2)})$ assuming $\det(\gamma^{(2)})>0$ (i.e. $0<p<1$) 
	\begin{align}
	& m(\rho,\gamma^{(2)})=\frac{1}{2}\left[r_0+r_1-\sqrt{(r_0-r_1)^2+4r_0r_1\epsilon^2}\right]\nonumber\\
	& M(\rho,\gamma^{(2)})=\frac{1}{2}\left[r_0+r_1+\sqrt{(r_0-r_1)^2+4r_0r_1\epsilon^2}\right]
	\end{align}
	where
	\be
	r_0\equiv \frac{a}{p}\quad\text{and}\quad r_1\equiv\frac{1-a}{1-p}\;.
	\ee
	
	By definition, both $m$ and $M$ are monotonic in the sense that 
	\begin{align}
	M\left(\Phi(\rho),\Phi(\gamma^{(2)})\right) & \leq M(\rho,\gamma^{(2)})\nonumber\\
	m\left(\Phi(\rho),\Phi(\gamma^{(2)})\right)& \geq m(\rho,\gamma^{(2)})\;.
	\end{align}
	Note that $m(\rho,\gamma^{(2)})=1/M(\gamma^{(2)},\rho)$ and $M$ is related to the max relative entropy:
	\begin{align}
	&D_{\max}(\rho\|\gamma^{(2)})=\log M(\rho,\gamma^{(2)})\nonumber\\
	&D_{\max}(\gamma^{(2)}\|\rho)=\log M(\gamma^{(2)},\rho)=-\log m(\rho,\gamma^{(2)})
	\end{align}
	A dual definition is the min-relative entropy defined by:
	\be
	D_{\min}(\gamma^{(2)}\|\rho)=-\log\tr\left[\rho\Pi_{\gamma^{(2)}}\right]\;,
	\ee
	where $\Pi_{\gamma^{(2)}}$ is the projection to the support of $\gamma^{(2)}$.
	Clearly, the if $\det(\gamma^{(2)})>0$ then $D_{\min}(\gamma^{(2)},\rho)=0$. Summarizing, we showed that the conditions in Eq.~(\ref{cond}) are equivalent to (remember the assumption here $\gamma^{(2)}>0$; the case of rank-one $\gamma^{(2)}$ will be considered separately below)
	\[
	\begin{cases}
	M(\rho,\gamma^{(2)})\ge M(\sigma,\gamma^{(2)})\\
	m(\rho,\gamma^{(2)})\le m(\sigma,\gamma^{(2)}).
	\end{cases}
	\]

To see how the above conditions can be used to prove Theorem~\ref{theo:qubit-gibbs}, we need the following lemma from Ref.~\cite{alberti_problem_1980}:
	\begin{lemma}(Alberti--Uhlmann)
		Let $\rho$, $\sigma$, $\eta$, and $\tau$ be qubit density matrices.
		Then, there exists a CPTP map $\Phi$ such that $\sigma=\Phi(\rho)$ and $\eta=\Phi(\tau)$ if and only if
		\be\label{condd}
		M(\rho,\tau)\geq M(\sigma,\eta)\geq m(\sigma,\eta)\geq  m(\rho,\tau)\;,
		\ee
		and
		\be\label{redan}
		\det(\sigma-t\eta)\geq \det(\rho-t\tau)\;\;\forall\;\;m(\sigma,\eta)\leq t \leq M(\sigma,\eta)\;.
		\ee
	\end{lemma}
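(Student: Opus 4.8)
The plan is to reduce the two geometric conditions \eqref{condd} and \eqref{redan} to the single trace-norm condition $\N{\rho-t\tau}_1\ge\N{\sigma-t\eta}_1$ for all $t\ge0$, which by Lemma~\ref{lem:separation} is exactly the relative majorization $(\rho,\tau)\succ(\sigma,\eta)$, i.e.\ the inclusion of testing regions $\set{T}(\rho,\tau)\supseteq\set{T}(\sigma,\eta)$. Once this reduction is in place the ``only if'' direction is a one-liner from contractivity of the trace norm, and the entire difficulty concentrates in showing that, \emph{for qubits}, testing-region inclusion already forces the existence of an actual channel.

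First I would establish the equivalence between conditions \eqref{condd}--\eqref{redan} and the trace-norm condition by an explicit $2\times2$ computation. For a Hermitian $2\times2$ operator $A$ one has $\N{A}_1=|\Tr A|$ when $\det A\ge0$ and $\N{A}_1=\sqrt{(\Tr A)^2-4\det A}$ when $\det A<0$. Applying this to $A=\rho-t\tau$, where $\Tr A=1-t$ and $\det(\rho-t\tau)=\det(\tau)(t-m(\rho,\tau))(t-M(\rho,\tau))$ is negative precisely on the open interval $(m(\rho,\tau),M(\rho,\tau))$, and likewise to $\sigma-t\eta$, I would split $t\ge0$ into the three regimes cut out by the two intervals $[m(\rho,\tau),M(\rho,\tau)]$ and $[m(\sigma,\eta),M(\sigma,\eta)]$. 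Outside both intervals both norms equal $|1-t|$; condition \eqref{condd} is forced where only the $(\sigma,\eta)$ determinant is negative, since there $\N{\rho-t\tau}_1=|1-t|$ cannot dominate $\sqrt{(1-t)^2-4\det(\sigma-t\eta)}$ unless $t$ also lies in $[m(\rho,\tau),M(\rho,\tau)]$, yielding the nesting of intervals; and on the common interval the inequality collapses exactly to \eqref{redan}. This part is routine bookkeeping.

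With the reduction done, necessity is immediate: if $\Phi$ is CPTP with $\Phi(\rho)=\sigma$ and $\Phi(\tau)=\eta$, then $\N{\sigma-t\eta}_1=\N{\Phi(\rho-t\tau)}_1\le\N{\rho-t\tau}_1$ for every $t$, which delivers the trace-norm condition and hence \eqref{condd} and \eqref{redan}.

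The hard part is sufficiency: from the trace-norm condition, equivalently $\set{T}(\rho,\tau)\supseteq\set{T}(\sigma,\eta)$, one must \emph{build} a CPTP map. This is precisely where dimension $2$ is indispensable, since the analogous implication fails for $d\ge3$ as recalled in the introduction; no purely majorization-theoretic argument can suffice, and the construction must exploit qubit geometry. I would pass to the Bloch-ball picture, writing each state as $\tfrac12(\openone+\vec r\cdot\vec\sigma)$, so that a qubit channel is an affine contraction $\vec r\mapsto A\vec r+\vec c$ of the ball into itself, subject to the complete-positivity constraints on $(A,\vec c)$. The two interpolation requirements fix $A(\vec r_\rho-\vec r_\tau)=\vec r_\sigma-\vec r_\eta$ together with one image point, while the quantities $m,M$ and the determinant gap in \eqref{condd}--\eqref{redan} translate into exactly the amount of directional contraction that a legitimate channel ellipsoid can accommodate. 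After reducing to diagonal $\tau,\eta$ by pre-/post-composing with unitaries (which leaves $m,M,\det$ and the CPTP class invariant), producing $(A,\vec c)$ explicitly and verifying complete positivity becomes a finite but delicate case analysis, and I would flag this as the main obstacle. A convex-duality alternative is available — the achievable set $\{(\Phi(\rho),\Phi(\tau))\}$ is convex and compact, so a separating-hyperplane argument would reduce sufficiency to evaluating its support function — but computing that support function in closed form is the very same qubit-specific difficulty in disguise.
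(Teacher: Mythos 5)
Your reduction of \eqref{condd} and \eqref{redan} to the single condition $\N{\rho-t\tau}_1\ge\N{\sigma-t\eta}_1$ for all $t\ge0$ is correct for full-rank $\tau,\eta$ (the degenerate case $\det\tau=0$ or $\det\eta=0$, where $M=\infty$ and $\det(\rho-t\tau)$ is linear rather than quadratic in $t$, needs separate words), and the necessity direction via contractivity of the trace norm is fine. For calibration: the paper does not prove this lemma at all --- it imports it verbatim from Alberti and Uhlmann~\cite{alberti_problem_1980} --- so there is no internal proof to compare against; the only question is whether your argument stands on its own.

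It does not, and the gap is exactly the one you flag yourself. The sufficiency direction --- that the trace-norm (equivalently, testing-region) condition implies the \emph{existence} of a CPTP map in dimension two --- is the entire nontrivial content of the Alberti--Uhlmann theorem, and your proposal stops at announcing a strategy (Bloch-ball affine maps, reduction to diagonal $\tau,\eta$, then ``a finite but delicate case analysis'') without carrying it out. As written, the argument is circular in effect: you have shown that the lemma is \emph{equivalent} to the Alberti--Uhlmann theorem in its trace-norm form, not that either statement holds. The missing step is substantial, not bookkeeping: the two interpolation constraints fix the affine map $(A,\vec c)$ only on the line through the Bloch vectors of $\rho$ and $\tau$, leaving the transverse action of $A$ free, and one must prove that \emph{some} completion satisfies the complete-positivity constraints on qubit channels under only the hypotheses \eqref{condd}--\eqref{redan}; this is precisely the point at which the analogous claim fails for $d\ge3$, so no soft argument can substitute for it. To close the gap you must either carry out that construction explicitly or cite~\cite{alberti_problem_1980} for this direction, as the paper does.
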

	
We now apply the above lemma above to the case $\tau=\eta=\gamma^{(2)}$.

\subsubsection{First case: non-zero temperature ($\gamma^{(2)}>0$).}
		We first assume $\det(\gamma^{(2)})>0$.
		The necessity of~\eqref{cond} follows from the fact that the min/max relative entropies both satisfies the data processing inequality. We therefore need to show that they are sufficient.
		With the choice $\tau=\eta=\gamma^{(2)}$ the conditions~\eqref{condd} are equivalent to the conditions~\eqref{cond}
		(recall the last condition of~\eqref{cond} is trivial since we assume for now that $\gamma^{(2)}$ is full rank).
		It is therefore left to show that the conditions~\eqref{redan} hold automatically if Eqs.~\eqref{cond} hold.
		Indeed, recall that for $\det(\gamma^{(2)})>0$ we have
		\be
		\det(\rho-t\gamma^{(2)})=\det(\gamma^{(2)})\left[t-m(\rho,\gamma^{(2)})\right]\left[t-M(\rho,\gamma^{(2)})\right]
		\ee
		and
		\be
		\det(\sigma-t\gamma^{(2)})=\det(\gamma^{(2)})\left[t-m(\sigma,\gamma^{(2)})\right]\left[t-M(\sigma,\gamma^{(2)})\right]
		\ee
		Hence, the inequality $\det(\sigma-t\gamma^{(2)})\geq \det(\rho-t\gamma^{(2)})$ is equivalent to
		\begin{align}\label{calc}
		&t\left[M(\rho,\gamma^{(2)})+m(\rho,\gamma^{(2)})-M(\sigma,\gamma^{(2)})-m(\sigma,\gamma^{(2)})\right]\nonumber\\
		&+m(\sigma,\gamma^{(2)})M(\sigma,\gamma^{(2)})-m(\rho,\gamma^{(2)})M(\rho,\gamma^{(2)})\geq 0
		\end{align}
		We therefore need to show that the above inequality holds for all $m(\sigma,\gamma^{(2)})\leq t \leq M(\sigma,\gamma^{(2)})$.
		It is therefore sufficient to show that it holds at the two extreme points of the interval. Indeed, for $t=m(\sigma,\gamma^{(2)})$
		after some algebra the expression in~\eqref{calc} becomes
		$$
		\left[M(\rho,\gamma^{(2)})-m(\sigma,\gamma^{(2)})\right]\left[m(\sigma,\gamma^{(2)})-m(\rho,\gamma^{(2)})\right]
		$$
		which is non-negative due to~\eqref{cond}. Similarly, substituting $t=M(\sigma,\gamma^{(2)})$ in~\eqref{calc} gives
		$$
		\left[M(\sigma,\gamma^{(2)})-m(\rho,\gamma^{(2)})\right]\left[M(\rho,\gamma^{(2)})-M(\sigma,\gamma^{(2)})\right]
		$$
		which is again non-negative due to~\eqref{cond}. This completes the proof of the theorem for the case $\det(\gamma^{(2)})>0$.
		
		\subsubsection{Second case: zero temperature ($\det(\gamma^{(2)})=0$).} In this case direct calculation gives
		\be
		\det(\rho-t\gamma^{(2)})=\det(\rho)-t\left(1-\tr[\rho\Pi_{\gamma^{(2)}}]\right)
		\ee
		and
		\be
		\det(\sigma-t\gamma^{(2)})=\det(\sigma)-t\left(1-\tr[\sigma\Pi_{\gamma^{(2)}}]\right).
		\ee
		Note that $\Pi_{\gamma^{(2)}}$ is either the projection $|1\ra\la1|$ or $|2\ra\la 2|$. 
		
		Now, assuming $\rho\neq\sigma\neq\gamma^{(2)}$ (otherwise the problem becomes trivial), since $\gamma^{(2)}$ is rank 1 we have $M(\rho,\gamma^{(2)})=M(\sigma,\gamma^{(2)})=\infty$. On the other hand, in this case a simple calculation gives
		\begin{align*}
		m(\rho,\gamma^{(2)})&=\frac{\det(\rho)}{1-\tr[\rho\Pi_{\gamma^{(2)}}]}\\ 
		m(\sigma,\gamma^{(2)})&=\frac{\det(\sigma)}{1-\tr[\sigma\Pi_{\gamma^{(2)}}]}
		\end{align*}
		where we used $\rho\neq\gamma^{(2)}$ and $\sigma\neq\gamma^{(2)}$, so that together with $\gamma^{(2)}$ being rank 1 gives $\tr[\rho\Pi_{\gamma^{(2)}}]<1$ and similarly $\tr[\sigma\Pi_{\gamma^{(2)}}]<1$.
		Hence, in this case exploring the behaviour of $\det(\sigma-t\gamma^{(2)})\geq \det(\rho-t\gamma^{(2)})$ in the limit $t\to M(\sigma,\gamma^{(2)})=\infty$ we must have 
		$$
		\tr[\sigma\Pi_{\gamma^{(2)}}]\geq \tr[\rho\Pi_{\gamma^{(2)}}]
		$$
		which is equivalent to $D_{\min}(\gamma^{(2)}\|\rho)\geq D_{\min}(\gamma^{(2)}\|\sigma)$. At the point $t=m(\sigma,\gamma^{(2)})$, the inequality $\det(\sigma-t\gamma^{(2)})\geq \det(\rho-t\gamma^{(2)})$ becomes
		\begin{align*}
		0 & \geq \det(\rho)-m(\sigma,\gamma^{(2)})\left(1-\tr[\rho\Pi_{\gamma^{(2)}}]\right)\\
		&=\left(1-\tr[\rho\Pi_{\gamma^{(2)}}]\right)\left(\frac{\det(\rho)}{1-\tr[\rho\Pi_{\gamma^{(2)}}]}-m(\sigma,\gamma^{(2)})\right)\\
		&=\left(1-\tr[\rho\Pi_{\gamma^{(2)}}]\right)\Big(m(\rho,\gamma^{(2)})-m(\sigma,\gamma^{(2)})\Big)
		\end{align*}
		which is satisfied since $m(\rho,\gamma^{(2)})\leq m(\sigma,\gamma^{(2)})$.
		Hence, $\det(\sigma-t\gamma^{(2)})\geq \det(\rho-t\gamma^{(2)})$
		for all $t$ with $m(\sigma,\gamma^{(2)})\le t<\infty$.
		This completes the proof.

\subsection{Test-and-prepare channels}

The proof of Theorem~\ref{theo:qubit-gibbs} above relies on a lemma proved by Alberti and Uhlmann~\cite{alberti_problem_1980}, which, together with Theorem~\ref{theo:1}, implies that, if $n=m=2$ (i.e., for qubits) then $(\rho_1,\rho_2)\succ(\rho'_1,\rho'_2)$ if and only if there exists a completely positive trace-preserving (CPTP) map $\Phi$ such that $\Phi(\rho_i)=\rho'_i$ ($i=1,2$). However, explicit counterexamples exist, showing that as soon as one leaves the qubit case, already when $n=3$ and $m=2$, this is not true anymore~\cite{matsumoto_example_2014}. Hence, leaving aside the general case, we focus instead on a special class of CPTP maps, namely, \textit{test-and-prepare channels} of the form:
\[
\mE(\rho)\defeq\Tr[E\rho]\xi_1+\Tr[(\openone-E)\rho]\xi_2,
\]
for some effect $0\le E\le\openone$ and some density matrices $\xi_1$ and $\xi_2$. Test-and-prepare channels are, in other words, measure-and-prepare channels for which the measurement has only two possible outcomes. Although restricted, this class seems quite natural in the framework of quantum relative Lorenz curves, which are defined only in terms of binary measurements (i.e., hypothesis tests). Indeed, a necessary and sufficient condition for the existence of a test-and-prepare channel between two pairs of density matrices can be expressed in terms of quantum relative Lorenz curves as follows:
\begin{theorem}\label{theo:meas-and-prep}
	Given two pairs of density matrices $(\rho_1,\rho_2)$ and $(\rho'_1,\rho'_2)$ on $\mathbb{C}^n$ and $\mathbb{C}^m$, respectively, there exists a test-and-prepare channel $\mE$ such that $\mE(\rho_1)=\rho_1'$ and $\mE(\rho_2)=\rho_2'$, if and only if the quantum Lorenz curve of $\rho_1$ relative to $\rho_2$ is nowhere below the segments joining the points $(0,0)$, $(1,1)$ and passing through either \[(x,y)=\left(\frac{1-m'}{M'-m'},\frac{M'(1-m')}{M'-m'}\right)\] or \[(x,y)=\left(\frac{M'-1}{M'-m'},\frac{m'(M'-1)}{M'-m'}\right),\] whichever is higher,
	where \[M'\defeq 2^{D_{\max}(\rho'_1\|\rho_2')}=\sup(\rho_1'/\rho_2')\] and \[m'\defeq2^{-D_{\max}(\rho'_2\|\rho'_1)}=\inf(\rho_1'/\rho_2').\]
\end{theorem}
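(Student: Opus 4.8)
The plan is to turn the existence of $\mE$ into a finite-dimensional positivity problem, and then convert that feasibility into the stated Lorenz-curve condition by exploiting the concavity of the curve. First I would apply $\mE(\cdot)=\Tr[E\,\cdot]\,\xi_1+\Tr[(\openone-E)\,\cdot]\,\xi_2$ to the two inputs; writing $y\defeq\Tr[E\rho_1]$ and $x\defeq\Tr[E\rho_2]$, the constraints $\mE(\rho_i)=\rho'_i$ read
\[
\rho'_1=y\,\xi_1+(1-y)\,\xi_2,\qquad \rho'_2=x\,\xi_1+(1-x)\,\xi_2 .
\]
By definition $(x,y)\in\set{T}(\rho_1,\rho_2)$, and every point of the testing region is realised by some admissible $E$, so the channel exists iff some $(x,y)\in\set{T}(\rho_1,\rho_2)$ makes this $2\times2$ system solvable with density matrices $\xi_1,\xi_2$. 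For $x\neq y$ the system inverts uniquely and taking traces yields $\Tr\xi_1=\Tr\xi_2=1$ for free, so the only real constraint is $\xi_1,\xi_2\ge0$ (if $x=y$ then necessarily $\rho'_1=\rho'_2$, a trivial case).

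Next I would read off the positivity conditions. In the branch $y>x$ the denominator $y-x$ is positive, so $\xi_2\ge0$ amounts to $y\rho'_2\ge x\rho'_1$, i.e. $\rho'_1\le(y/x)\rho'_2$, which by the definition $\sup(\rho'_1/\rho'_2)=M'$ is exactly $y\ge M'x$; likewise $\xi_1\ge0$ amounts to $(1-x)\rho'_1\ge(1-y)\rho'_2$, i.e. $\rho'_2\le\frac{1-x}{1-y}\rho'_1$, which by $\inf(\rho'_1/\rho'_2)=m'=1/\sup(\rho'_2/\rho'_1)$ is $y\ge m'x+(1-m')$. Hence (in this branch) a channel exists iff $\set{T}(\rho_1,\rho_2)$ meets the wedge $F\defeq\{y\ge M'x\}\cap\{y\ge m'x+(1-m')\}$, whose lower boundary is the pointwise maximum of the two lines --- a convex ``V'' with vertex exactly at the higher candidate point of the statement.

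The decisive step is to show that $\set{T}(\rho_1,\rho_2)$ meets $F$ precisely when its upper boundary --- the Lorenz curve $L$, which is concave and joins $(0,0)$ to $(1,1)$ --- lies nowhere below the bent line $(0,0)\to(\text{vertex})\to(1,1)$ through the same candidate point. Since the highest point of $\set{T}(\rho_1,\rho_2)$ over each abscissa is $(x,L(x))$, meeting $F$ is equivalent to $L(x)\ge\max(M'x,\,m'x+1-m')$ for \emph{some} $x$, hence to $L$ attaining at least the vertex height there. The content is that, for a concave $L$ pinned at the two endpoints, this single inequality is equivalent to $L$ dominating the \emph{entire} bent line: the forward implication follows from the concave chord inequality applied on the two sub-intervals cut by the vertex (using that $(1,1)$ and $(0,0)$ lie on the shallow and steep lines respectively), and the converse is trivial by evaluation at the vertex. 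I expect this upgrade from ``$L$ above the vertex'' (an existence statement) to ``$L$ above the whole bent line'' (a universal statement) to be the main obstacle requiring care.

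Finally I would treat the branch $y<x$ and the phrase ``whichever is higher''. The central symmetry $(x,y)\mapsto(1,1)-(x,y)$ leaves $\set{T}(\rho_1,\rho_2)$ invariant and sends $F$ to the analogous wedge built on the second candidate point, so the $y<x$ analysis produces an equivalent condition. Because $M'\ge1\ge m'$, the first candidate point lies above the diagonal and the second below it, whence the two bent lines are ordered pointwise and ``whichever is higher'' always selects the former; the condition therefore collapses to the single statement in the theorem. The remaining degenerate cases --- $\rho'_1=\rho'_2$ (so $M'=m'=1$, and any constant channel works) and support mismatches giving $M'=\infty$ or $m'=0$ (a vertical or horizontal piece of the bent line) --- I would dispatch separately by continuity together with a support argument.
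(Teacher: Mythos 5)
Your proposal is correct, and its first half coincides with the paper's: both reduce the existence of $\mE$ to positivity of the two reconstructed output states, which (for $e_1>e_2$, i.e.\ $y>x$) becomes the pair of linear constraints $\Tr[E(\rho_1-M'\rho_2)]\ge0$ and $\Tr[E(\rho_1-m'\rho_2)]\ge1-m'$ — exactly your wedge $\{y\ge M'x\}\cap\{y\ge m'x+1-m'\}$. From there the two arguments genuinely diverge. The paper packages feasibility into a witness $W\ge0$, evaluates it by SDP duality as a minimization of $\N{(1+r)\rho_1-(rM'+m')\rho_2}_1$-type expressions, reparametrizes via $t=(rM'+m')/(1+r)$ to obtain $\N{\rho_1-t\rho_2}_1\ge\N{\sigma_1-t\sigma_2}_1$ for all $t\ge0$ with explicit diagonal $2\times2$ states $\sigma_1,\sigma_2$, and then invokes Theorem~\ref{theo:1} to translate this into the Lorenz-curve statement. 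You instead argue geometrically: the wedge is an upper set in $y$, so $\set{T}(\rho_1,\rho_2)$ meets it iff $L(x)\ge\max(M'x,\,m'x+1-m')$ for some $x$; concavity of $L$ together with the anchors $(0,0)$ and $(1,1)$ (each lying on one of the two lines) upgrades this first to $L(x_v)\ge y_v$ at the vertex and then to domination of the whole bent line — I checked the chord inequality on both sub-intervals and it goes through, as does your symmetry argument for the $y<x$ branch and the observation that the first candidate point is always the higher one since $M'\ge1\ge m'$. Your route buys elementary self-containedness (no duality, no appeal to Theorem~\ref{theo:1}) and the sharper intermediate statement that existence is equivalent to the single membership $(x_v,y_v)\in\set{T}(\rho_1,\rho_2)$; the paper's route buys an explicit classical pair $(\sigma_1,\sigma_2)$ realizing the bent line, so that the criterion is literally the relative majorization $(\rho_1,\rho_2)\succ(\sigma_1,\sigma_2)$ and plugs into the divergence reformulations of Theorem~\ref{theo:1}. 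The degenerate cases you defer ($\rho_1'=\rho_2'$, $M'=\infty$, $m'=0$) are not treated in more detail in the paper either, so nothing is missing relative to the original.
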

\begin{proof}
Consider a test-and-prepare channel of the form:
\be\label{mpc}
\mE(\rho)=\Tr[E\rho]\sigma_1+\Tr\left[(\openone-E)\rho\right]\sigma_2
\ee
where $\sigma_1,\sigma_2$ are density matrices (i.e. positivie semi-definite matrices with trace 1) and 
$0\leq E\leq \openone$. If $\mE(\rho_j)=\rho_{j}'$ for $j=1,2$, then
\begin{align*}
\rho_1'=e_1\sigma_1+(1-e_1)\sigma_2\\
\rho_2'=e_2\sigma_1+(1-e_2)\sigma_2
\end{align*}
where $e_j\equiv\Tr[E\rho_j]$ for $j=1,2$. Assuming $e_1\neq e_2$ (otherwise, $\rho_1'=\rho_2'$), the above equations are equivalent to
\begin{align*}
\sigma_1&=\frac{1}{e_1-e_2}\left[(1-e_2)\rho_1'-(1-e_1)\rho_2'\right]\\
\sigma_2&=\frac{1}{e_1-e_2}\left[-e_2\rho_1'+e_1\rho_2'\right]
\end{align*} 
Note that $\sigma_1$ and $\sigma_2$ have trace $1$ since $\rho_1'$ and $\rho_2'$ have trace 1.
W.l.o.g. we can assume $e_1>e_2$. With this choice, $\sigma_1$ and $\sigma_2$ are positive semi-definite if and only if
\be
\rho_1'-\frac{1-e_1}{1-e_2}\rho_2'\geq 0\quad\text{and}\quad
\rho_2'-\frac{e_2}{e_1}\rho_1'\geq 0\;.
\ee
Note that since we assume $e_1>e_2$ we have $1-e_2>0$ and $e_1>0$.
Denote by $m'\equiv\inf(\rho_1'/\rho_2')$ and by $M'\equiv\sup(\rho_1'/\rho_2')$, and note that
$\inf(\rho_2'/\rho_1')=1/M'$. We therefore get that $\sigma_1$ and $\sigma_2$ are positive semi-definite if and only if
\be
\frac{1-e_1}{1-e_2}\leq m' \quad\text{and}\quad
\frac{e_2}{e_1}\leq \frac{1}{M'}\;.
\ee
The above inequalities are equivalent to
$$
\begin{cases}
\Tr\left[E\left(\rho_1-m'\rho_2\right)\right]\geq 1-m'\;,\\
\Tr\left[E\left(\rho_1-M'\rho_2\right)\right]\geq 0\;.
\end{cases}
$$
We therefore arrive at the following lemma:
\begin{lemma}
	There exists a channel $\mE$ of the form~\eqref{mpc} such that $\mE(\rho_1)=\rho_1'$ and $\mE(\rho_2)=\rho_2'$, if and only if
	\be
	W(\rho_1,\rho_2,\rho_1',\rho_2')\geq 0
	\ee 
	where the witness $W$ is defined as
	\begin{align}
	W(&\rho_1,\rho_2,\rho_1',\rho_2')\defeq \nonumber\\
	&m'-1+\max_{\substack{0\leq E\leq \openone\\
			\Tr\left[E\left(\rho_1-M'\rho_2\right)\right]\geq 0}}\Tr\left[E\left(\rho_1-m'\rho_2\right)\right]\label{eq:witness}	
	\end{align}
\end{lemma}
The calculation of $W$ can be simplified using the following dual formulation of linear programming, analogously to what we did in the proof of Lemma~\ref{lem:5}. Let $V_1$ and $V_2$ be two (inner product) vector spaces with two cones $K_1\subset V_1$ and $K_2\subset V_2$. Consider two vectors $v_1\in V_1$ and $v_2\in V_2$, and a linear map
$\cT:V_1\to V_2$. Then, the primal form is:
\be\label{sdp1}
\max_{\substack{x\in K_1\\
		v_2-\cT(x)\in K_2}}\langle v_1,x\rangle_1
\ee
The dual form involves the adjoint map $\cT^*:V_2\to V_1$:
\be\label{sdp2}
\min_{\substack{y\in K_2\\
		\cT^*(y)-v_1\in K_1}}\langle v_2,y\rangle_2
\ee
For our purposes, we take $V_1=\mH_n$ the space of $n\times n$ Hermitian matrices, and we take $K_1=\mH_{n,+}$ the cone of positive semi-definite matrices in $\mH_n$. We further define the vector space
\be
V_2\equiv \mbb{R}\oplus \mH_n=\left\{(r,A)\;\Big|\;r\in\mbb{R}\;\;;\;\;A\in\mH_n\right\}\;,
\ee
with inner product $\left\langle (r,A),(t,B)\right\rangle_1:=rt+\tr[AB]$.
Further, define $K_2=\mbb{R}_{+}\oplus\mH_{n,+}$ to be the positive cone in $V_2$.
The linear map $\cT:V_1\to V_2$ is defined as follows:
\be
\cT(A)=\left(-\Tr\left[A\left(\rho_1-M'\rho_2\right)\right],A\right)\;.
\ee
Note that the dual map $\cT^{*}:V_2\to V_1$ is given by
\be
\cT^*(r,A)=A-r\left(\rho_1-M'\rho_2\right)\;.
\ee
Finally, set $v_1=\rho_1-m'\rho_2$ and $v_2=(0,\openone_n)$. With these choices, $v_2-\cT(x)=\left(\Tr\left[x\left(\rho_1-M'\rho_2\right)\right],\openone_n-x\right)$, so that the primal problem becomes
\be
\begin{split}
	&\max_{\substack{x\in K_1\\
			v_2-\cT(x)\in K_2}}\langle v_1,x\rangle_1\\
		&=\max_{\substack{0\leq E\leq \openone_n\\
			\Tr\left[E\left(\rho_1-M'\rho_2\right)\right]\geq 0}}\Tr\left[E\left(\rho_1-m'\rho_2\right)\right]
\end{split}
\ee
where we renamed $x$ with $E$. The dual problem is given by
\be
\begin{split}
&\min_{\substack{y\in K_2\\
		\cT^*(y)-v_1\in K_1}}\langle v_2,y\rangle_2\\
	&=
\min_{\substack{r,F\geq 0\\
		F\geq r(\rho_1-M'\rho_2)+(\rho_1-m'\rho_2)}}\Tr[F]
\end{split}
\ee
where we took $y=(r,F)$. We can further simplify the above expression. First note that, for any given $r$,
the positive semi-definite matrix $F$ with the smallest trace that satisfies
\[
\begin{split}
F&\geq  r(\rho_1-M'\rho_2)+(\rho_1-m'\rho_2)\\
&=(1+r)\rho_1-(rM'+m')\rho_2
\end{split}
\]
is of course the positive part of the left-hand side:
\be
F=\left[(1+r)\rho_1-(rM'+m')\rho_2\right]_+\;.
\ee
We therefore conclude that the dual problem is equivalent to
\begin{align}
&\min_{r\geq 0}\Tr\left[(1+r)\rho_1-(rM'+m')\rho_2\right]_+\nonumber\\
&=\min_{r\geq 0}\frac{1-m'-r(M'-1)+\N{(1+r)\rho_1-(rM'+m')\rho_2}_1}{2}\;.
\end{align}
By plugging the above equation into~(\ref{eq:witness}), we therefore conclude that
\begin{equation}
\begin{split}
&W(\rho_1,\rho_2,\rho_1',\rho_2')
=m'-1+\frac{1}{2}(1-m')\nonumber\\
&+\frac{1}{2}\min_{r\geq 0}\Big(\N{(1+r)\rho_1-(rM'+m')\rho_2}_1-r(M'-1)\Big)\nonumber\\
&=\frac12(m'-1)\nonumber\\
&+\frac{1}{2}\min_{r\geq 0}\Big(\N{(1+r)\rho_1-(rM'+m')\rho_2}_1-r(M'-1)\Big),
\end{split}
\end{equation}
namely,
\begin{align}
&2W(\rho_1,\rho_2,\rho_1',\rho_2')=-(1-m')\nonumber\\
&+\min_{r\geq 0}\Big(\N{(1+r)\rho_1-(rM'+m')\rho_2}_1-r(M'-1)\Big)\;.
\end{align}
Introducing $$t\defeq\frac{rM'+m'}{1+r}\;,$$ and noting that $t\in[m',M')$ we obtain that there exists a channel $\mE$ of the form~\eqref{mpc} such that $\mE(\rho_1)=\rho_1'$ and $\mE(\rho_2)=\rho_2'$, if and only if
\begin{equation}
\begin{split}
&\N{\rho_1-t\rho_2}_1\\
&\geq \frac{(M'-t)(1-m')+(t-m')(M'-1)}{M'-m'}\\
&= \frac{m'+M'-2m'M'+t\left(m'+M'-2)\right)}{M'-m'}\;,
\end{split}
\end{equation}
for all $t\in[m',M']$,
or equivalently, 
\be\label{eq:app-alb}
\N{\rho_1-t\rho_2}_1\geq \N{\sigma_1-t\sigma_2}_1\;,\qquad\forall\;t\geq 0\;,
\ee
where
\begin{align}\label{eq:sigmas}
&\sigma_1\equiv \frac{1}{M'-m'}
\begin{pmatrix}
	M'(1-m') & 0\\
	0 & m'(M'-1)
\end{pmatrix}\nonumber\\
&\sigma_2\equiv \frac{1}{M'-m'}
\begin{pmatrix}
	1-m' & 0\\
	0 & M'-1
\end{pmatrix}
\end{align}
are two diagonal $2\times 2$ density matrices with the property that $\sup(\sigma_1/\sigma_2)=M'\equiv\sup(\rho_1'/\rho_2')$ and $\inf(\sigma_1/\sigma_2)=m'\equiv\inf(\rho_1'/\rho_2')$. 

Finally, the statement of Theorem~\ref{theo:meas-and-prep} is obtained noticing that condition~(\ref{eq:app-alb}) is equivalent, due to Theorem~\ref{theo:1}, to saying that the Lorenz  curve of $\rho_1$ relative to $\rho_2$ is never below that of $\sigma_1$ relative to $\sigma_2$. However, since the latter is the quantum Lorenz curve of two classical probability distributions, it is just made of two segments joining the points $(0,0)$ with $(1,1)$, passing through either $\left(\frac{1-m'}{M'-m'},\frac{M'(1-m')}{M'-m'}\right)$ or $\left(\frac{M'-1}{M'-m'},\frac{m'(M'-1)}{M'-m'}\right)$, whichever determines the steepest curve.
\end{proof}

\subsection{Probabilistic transformations}

By mixing $\rho_1'$ with a sufficient fraction of $\rho_2'$, while keeping $\rho_2'$ unchanged, it is always possible to decrease the gap between $m'$ and $M'$, until the conditions of Theorem~\ref{theo:meas-and-prep} are met. In this way, with sufficient mixing, any output pair can be obtained, but the noise due to mixing cannot be undone afterwards.

One way to overcome this problem is to relax the assumptions made on the channel, in particular, the condition of trace-preservation. We hence consider \textit{probabilistic} channels of the following form:
\begin{equation}\label{eq:prob-test-prop}
\mE(\rho)\defeq\Tr[E\rho]\xi_1+\Tr[F\rho]\xi_2,
\end{equation}
where $E,F\ge0$, $E+F\le\openone$, and $\xi_1,\xi_2$ are two (normalized) density matrices. The above transformation constitutes a \textit{heralded} probabilistic transformation, in the sense that we know if the protocol succeeded or not, with success probability given by $\psucc=\Tr[(E+F)\rho]$.
The main result of this subsection is given by the following
\begin{theorem}\label{theo:probabilistic}
	Consider two pairs of density matrices $(\rho_1,\rho_2)$ and $(\rho_1',\rho_2')$ on $\mathbb{C}^n$ and $\mathbb{C}^m$, respectively. Then, a channel of the form~(\ref{eq:prob-test-prop}), such that
	\begin{equation}\label{eq:constraints}
	\mE(\rho_1)=p_1\rho_1',\quad\mE(\rho_2)=p_2\rho_2',
	\end{equation}
	exists if and only if
	\[
	\frac{m}{m'}\le\frac{p_1}{p_2}\le\frac{M}{M'},
	\]
	where $p_i=\Tr[(E+F)\rho_i]$, and $m$, $M$, $m'$, and $M'$ are as in Theorem~\ref{theo:meas-and-prep}.
\end{theorem}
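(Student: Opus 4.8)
The plan is to prove the two directions by quite different means: necessity by a short monotonicity argument, and sufficiency by explicitly building the effects $E,F$ and the prepared states $\xi_1,\xi_2$.

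For necessity I would use that both $\sup(\rho_1/\rho_2)=M$ and $\inf(\rho_1/\rho_2)=m$ are monotone under \emph{any} positive map, not only trace-preserving ones: if $\lambda\rho_2-\rho_1\ge0$ then applying a positive map preserves this inequality, so $\sup(\mE(\rho_1)/\mE(\rho_2))\le\sup(\rho_1/\rho_2)$, and dually $\inf(\mE(\rho_1)/\mE(\rho_2))\ge\inf(\rho_1/\rho_2)$. The channel~(\ref{eq:prob-test-prop}) is manifestly completely positive, and by~(\ref{eq:constraints}) we have $\mE(\rho_1)=p_1\rho_1'$, $\mE(\rho_2)=p_2\rho_2'$. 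Using the homogeneity $\sup(p_1\rho_1'/p_2\rho_2')=\tfrac{p_1}{p_2}\sup(\rho_1'/\rho_2')=\tfrac{p_1}{p_2}M'$, and likewise $\inf(p_1\rho_1'/p_2\rho_2')=\tfrac{p_1}{p_2}m'$, the two monotonicity bounds become $\tfrac{p_1}{p_2}M'\le M$ and $\tfrac{p_1}{p_2}m'\ge m$, which are exactly the claimed inequalities.

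For sufficiency, set $k\defeq p_1/p_2$ and assume $m/m'\le k\le M/M'$. The key starting point is that, as $G$ runs over all effects $0\le G\le\openone$, the ratio $\Tr[G\rho_1]/\Tr[G\rho_2]$ sweeps out the whole interval $[m,M]$, since $m$ and $M$ are by definition its infimum and supremum and the effect set is convex. Because $M'k\le M$ and $m'k\ge m$, I can choose an effect $E_0$ with $\Tr[E_0\rho_1]/\Tr[E_0\rho_2]\ge M'k$ and an effect $F_0$ with $\Tr[F_0\rho_1]/\Tr[F_0\rho_2]\le m'k$. Setting $E=sE_0$ and $F=tF_0$, the induced success ratio $(\Tr[E\rho_1]+\Tr[F\rho_1])/(\Tr[E\rho_2]+\Tr[F\rho_2])$ is a mediant interpolating continuously between the ratio of $F_0$ (at most $k$) and that of $E_0$ (at least $k$) as $s/t$ runs from $0$ to $\infty$; I fix $s/t$ so that it equals $k$, and then shrink $s,t$ by a common factor to enforce $E+F\le\openone$. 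Writing $e_i\defeq\Tr[E\rho_i]$ and $f_i\defeq\Tr[F\rho_i]$, the constraints $\mE(\rho_i)=p_i\rho_i'$ form the $2\times2$ linear system $e_i\xi_1+f_i\xi_2=p_i\rho_i'$, which I solve for $\xi_1=D^{-1}(f_2p_1\rho_1'-f_1p_2\rho_2')$ and $\xi_2=D^{-1}(e_1p_2\rho_2'-e_2p_1\rho_1')$ with $D=e_1f_2-e_2f_1$. The identity $p_i=e_i+f_i$ makes $\Tr[\xi_1]=\Tr[\xi_2]=1$ automatic, and since $D=e_2f_2(r_E-r_F)>0$ (writing $r_E,r_F$ for the two ratios), positivity of $\xi_1$ reduces to $f_1/f_2\le m'k$ and that of $\xi_2$ to $e_1/e_2\ge M'k$ --- exactly the operator inequalities guaranteed by $m'=\inf(\rho_1'/\rho_2')$, $M'=\sup(\rho_1'/\rho_2')$ and the choice of $E_0,F_0$.

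The main obstacle is the positivity bookkeeping in this last step together with its boundary cases. One must confirm that an effect realizing a ratio at or beyond $M'k$ (resp. $m'k$) really exists, i.e. that the extrema defining $M$ and $m$ are attained on the appropriate eigenspaces of $M\rho_2-\rho_1$ and $\rho_1-m\rho_2$; one must treat separately the degenerate moments $e_2=0$ or $f_1=0$, where a branch sees only one input and one operator inequality becomes vacuous; and one must handle $M'=m'$ (i.e. $\rho_1'\propto\rho_2'$), where $D\to0$ and the two-outcome structure collapses, as a trivial or limiting case. These are all elementary but are where the argument genuinely needs care.
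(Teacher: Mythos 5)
Your proposal is correct, and it reaches the paper's conclusion by a route that differs in one direction and sharpens the other. For necessity, the paper derives the constraint $m/m'\le q\le M/M'$ from the trace conditions $\Tr[F(qm'\rho_2-\rho_1)]\ge0$ and $\Tr[E(\rho_1-qM'\rho_2)]\ge0$ (showing that outside the interval these force $\Tr[(E+F)\rho_i]=0$), whereas you invoke the contractivity of $\sup(\cdot/\cdot)$ and $\inf(\cdot/\cdot)$ under arbitrary positive maps together with their homogeneity under rescaling of the arguments; this is cleaner and more conceptual, and it is the same mechanism underlying the Hilbert-metric result of Reeb et al.\ that the paper cites. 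For sufficiency, your $2\times2$ inversion, the trace identity from $p_i=e_i+f_i$, the sign of $D=e_2f_2(r_E-r_F)$, and the resulting positivity conditions $f_1/f_2\le km'$ and $e_1/e_2\ge kM'$ reproduce the paper's computation exactly; what you add is the explicit existence of suitable effects $E_0,F_0$ (via attainment of $m$ and $M$ on kernel vectors of $\rho_1-m\rho_2$ and $M\rho_2-\rho_1$) and the mediant interpolation fixing $p_1/p_2=k$, which uses $m'\le1\le M'$ so that $r_F\le k\le r_E$. The paper instead packages this step as an SDP $P_{\max}(q)$ and only proves explicitly that $P_{\max}(q)>0$ forces $q$ into the interval, leaving the converse implicit --- so your construction actually makes the ``if'' direction more complete than the printed argument. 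The degenerate cases you flag (attainment of the extrema, $e_2=0$ or $f_2=0$, and $M'=m'$) are genuine but resolve as you expect in finite dimension; in particular, when $M<\infty$ the kernel of $M\rho_2-\rho_1$ always contains a vector with $\langle v|\rho_2|v\rangle>0$, so the supremum is attained.
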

When the protocol fails, we just prepare the state $\rho_2'$ independently of the input. In this way, we realized a channel that deterministically transforms $\rho_2$ into $\rho_2'$, but is also able to transform $\rho_1$ into $\rho_1'$, whenever the successful event is recorded.

In Ref.~\cite{reeb_hilberts_2011}, it is shown that a probabilistic transformation from $(\rho_1,\rho_2)$ to $(\rho'_1,\rho'_2)$ exists if and only if $M'/m'\le M/m$. The above theorem hence slightly extends that. For example, Theorem~\ref{theo:probabilistic} implies that the success probability $p_1$ be bounded as \[p_1\le M/M'\equiv  e^{-\Delta F_{\max}},\] where we define the \textit{max-free energy difference} as \[\Delta F_{\max}\defeq D_{\max}(\rho_1'\|\rho_2')-D_{\max}(\rho_1\|\rho_2).\]

In order to prove Theorem~\ref{theo:probabilistic}, let us consider a probabilistic test-and-prepare channel of the form:
\be\label{mpc2}
\Phi(\rho)=\Tr[E\rho]\sigma_1+\Tr\left[F\rho\right]\sigma_2
\ee
where $\sigma_1,\sigma_2$ are normalized density matrices and 
$E,F\geq 0$ with $E+F\leq \openone$. If $\mE(\rho_j)=p_j\rho_{j}'$ for $j=1,2$ with $0< p_1,p_2\leq 1$, then
\begin{align*}
p_1\rho_1'=e_1\sigma_1+f_1\sigma_2\\
p_2\rho_2'=e_2\sigma_1+f_2\sigma_2
\end{align*}
where $e_j\equiv\Tr[E\rho_j]$ and $f_j\equiv\Tr[F\rho_j]$ for $j=1,2$. Assuming $e_1f_2\ne e_2f_1$ the above equations are equivalent to
\begin{align*}
\sigma_1&=\frac{1}{e_1f_2-e_2f_1}\left[f_2p_1\rho_1'-f_1p_2\rho_2'\right]\\
\sigma_2&=\frac{1}{e_1f_2-e_2f_1}\left[-e_2p_1\rho_1'+e_1p_2\rho_2'\right]
\end{align*} 
Note that $\sigma_1$ and $\sigma_2$ have trace $1$ since $\rho_1'$ and $\rho_2'$ have trace 1.
Without loss of generality, we can assume $e_1/e_2>f_1/f_2$. With this choice, $\sigma_1$ and $\sigma_2$ are positive semi-definite if and only if
\be
\rho_1'-q^{-1}\frac{f_1}{f_2}\rho_2'\geq 0\quad\text{and}\quad
q^{-1}\frac{e_1}{e_2}\rho_2'-\rho_1'\geq 0\;,
\ee
where $q\defeq p_1/p_2$.
Again, denote by $m'=\inf(\rho_1'/\rho_2')$ and by $M'=\sup(\rho_1'/\rho_2')$. We therefore get that $\sigma_1$ and $\sigma_2$ are positive semi-definite if and only if
\be
\frac{f_1}{f_2}\leq qm' \quad\text{and}\quad
\frac{e_1}{e_2}\geq qM'\;.
\ee
Recalling the definitions of $e_1,e_2,f_1,f_2$, the above inequalities are equivalent to
\begin{align}
& \Tr\left[F\left(qm'\rho_2-\rho_1\right)\right]\geq 0\nonumber\\
& \Tr\left[E\left(\rho_1-qM'\rho_2\right)\right]\geq 0\;.
\end{align}
We therefore arrive at the following lemma:
\begin{lemma}
	There exists a CP map $\Phi$ of the form~\eqref{mpc2} such that $\Phi(\rho_1)=p_1\rho_1'$ and $\Phi(\rho_2)=p_2\rho_2'$, if and only if
	\be
	\Tr\left[F\left(qm'\rho_2-\rho_1\right)\right]\geq 0
	\ee
	and
	\begin{equation}
	\Tr\left[E\left(\rho_1-qM'\rho_2\right)\right]\geq 0\;.
	\end{equation}
\end{lemma}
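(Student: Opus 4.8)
The plan is to reduce the existence of the prepared states $\sigma_1,\sigma_2$ to two scalar inequalities via the linear algebra already set up above. First I would fix the effects $E,F$ together with the induced numbers $e_j\defeq\Tr[E\rho_j]$ and $f_j\defeq\Tr[F\rho_j]$ (so that $p_j=e_j+f_j$), and read the two constraints $\Phi(\rho_j)=p_j\rho_j'$ as the linear system
\begin{align*}
p_1\rho_1'&=e_1\sigma_1+f_1\sigma_2,\\
p_2\rho_2'&=e_2\sigma_1+f_2\sigma_2,
\end{align*}
in the operator unknowns $\sigma_1,\sigma_2$. Assuming the determinant $e_1f_2-e_2f_1\neq0$, this system is invertible and yields the explicit expressions for $\sigma_1$ and $\sigma_2$ displayed above. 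A short computation using $p_j=e_j+f_j$ and $\Tr\rho_1'=\Tr\rho_2'=1$ shows that the coefficients collapse to give $\Tr\sigma_1=\Tr\sigma_2=1$ automatically, so the only substantive requirement left for $\Phi$ to be a legitimate CP map of the form~\eqref{mpc2} is that $\sigma_1$ and $\sigma_2$ be positive semi-definite (complete positivity and the normalization $E+F\le\openone$ are already built into the form).

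Next I would pin down the sign of the determinant. Up to relabelling $\sigma_1\leftrightarrow\sigma_2$ (equivalently $E\leftrightarrow F$) I may assume $e_1/e_2>f_1/f_2$, so that $e_1f_2-e_2f_1>0$ and the overall prefactors are positive. Dividing each positivity condition by the relevant positive scalar then turns $\sigma_1\ge0$ and $\sigma_2\ge0$ into the operator inequalities
\[
\rho_1'-q^{-1}\frac{f_1}{f_2}\rho_2'\ge0
\qquad\text{and}\qquad
q^{-1}\frac{e_1}{e_2}\rho_2'-\rho_1'\ge0,
\]
with $q\defeq p_1/p_2$. The key step is to recognise these through the definitions $m'=\inf(\rho_1'/\rho_2')$ and $M'=\sup(\rho_1'/\rho_2')$: by definition $\inf(\rho_1'/\rho_2')$ is the largest $\lambda$ with $\rho_1'-\lambda\rho_2'\ge0$ and $\sup(\rho_1'/\rho_2')$ is the smallest $\lambda$ with $\lambda\rho_2'-\rho_1'\ge0$, so the two displayed inequalities are equivalent to the scalar bounds $f_1/f_2\le qm'$ and $e_1/e_2\ge qM'$, respectively.

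Finally I would clear denominators in these scalar bounds (using $f_2,e_2>0$) to obtain $qm'f_2-f_1\ge0$ and $e_1-qM'e_2\ge0$, which, upon substituting back $e_j=\Tr[E\rho_j]$ and $f_j=\Tr[F\rho_j]$, are exactly $\Tr[F(qm'\rho_2-\rho_1)]\ge0$ and $\Tr[E(\rho_1-qM'\rho_2)]\ge0$. Since every step in the chain is an equivalence, both directions of the lemma follow at once. The hard part will be the bookkeeping of the degenerate cases that the chain quietly assumes away: the vanishing-determinant case $e_1f_2=e_2f_1$, where $\sigma_1,\sigma_2$ are not uniquely pinned down and existence must be argued directly, and the boundary cases $f_2=0$ or $e_2=0$, where the divisions used to isolate the ratios $f_1/f_2$ and $e_1/e_2$ are illegitimate and the bounds should instead be read projectively. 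Verifying that these edge situations are either vacuous or still consistent with the two stated trace inequalities is the one place that needs care beyond the routine algebra.
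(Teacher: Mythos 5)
Your argument is essentially identical to the paper's: the lemma there is stated as the conclusion of exactly the derivation you give, namely inverting the $2\times2$ linear system for $\sigma_1,\sigma_2$, assuming without loss of generality $e_1/e_2>f_1/f_2$, and translating positivity of $\sigma_1,\sigma_2$ through the definitions of $m'=\inf(\rho_1'/\rho_2')$ and $M'=\sup(\rho_1'/\rho_2')$ into the two trace inequalities. Your closing remark about the degenerate cases ($e_1f_2=e_2f_1$, or $e_2=0$, $f_2=0$) identifies a point the paper silently glosses over, but it does not change the route of the proof.
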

Our goal is to maximize $p_1$ under these constraints along with the constraint $\tr\left[(E+F)(q\rho_2-\rho_1)\right]=0$ that defines $q$. Therefore, the maximum value of $p_1$, with a fixed value of $q\in\mbb{R}_{+}$, is given by
\be
P_{\max}(q)=\max_{\substack{\tr\left[(E+F)(q\rho_2-\rho_1)\right]=0\\
		\Tr\left[E\left(\rho_1-qM'\rho_2\right)\right]\geq 0\\
		\Tr\left[F\left(qm'\rho_2-\rho_1\right)\right]\geq 0\\
		E,F\geq 0\;,\;E+F\leq \openone_n}}\Tr\left[(E+F)\rho_1\right]
\ee
This is an optimization problem that can be solved efficiently and algorithmically using SDP.
Moreover, in the lemma below we show that if $q$ is not in the right interval then $P_{\max}(q)=0$.
\begin{lemma}
	$P_{\max}(q)>0$ implies that 
	$$
	\frac{m}{m'}\leq q\leq \frac{M}{M'}\;.
	$$ 
\end{lemma}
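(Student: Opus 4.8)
The plan is to translate the two operator inequalities of the preceding lemma into a handful of scalar relations among the four numbers $e_j\defeq\Tr[E\rho_j]$ and $f_j\defeq\Tr[F\rho_j]$ ($j=1,2$), and then to squeeze $q$ between $m/m'$ and $M/M'$ by playing these relations against the ``outer'' bounds coming from $m=\inf(\rho_1/\rho_2)$ and $M=\sup(\rho_1/\rho_2)$. Concretely, the constraint $\Tr[E(\rho_1-qM'\rho_2)]\ge0$ reads $e_1\ge qM'e_2$, the constraint $\Tr[F(qm'\rho_2-\rho_1)]\ge0$ reads $f_1\le qm'f_2$, and the constraint $\Tr[(E+F)(q\rho_2-\rho_1)]=0$ defining $P_{\max}(q)$ reads $p_1=qp_2$, where $p_1\defeq e_1+f_1$ and $p_2\defeq e_2+f_2$. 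The hypothesis $P_{\max}(q)>0$ means exactly that some admissible $(E,F)$ achieves $p_1=\Tr[(E+F)\rho_1]>0$; in particular $q>0$ and $p_2>0$.

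First I would extract the outer bounds. Since $E,F\ge0$ and $E+F\le\openone$, both $E$ and $F$ satisfy $0\le\,\cdot\,\le\openone$. Hence, pairing the operator-inequality characterisations $\rho_1-m\rho_2\ge0$ and $M\rho_2-\rho_1\ge0$ (recalled earlier as the definitions of $m$ and $M$) with $E\ge0$ and with $F\ge0$ yields
\[
me_2\le e_1\le Me_2,\qquad mf_2\le f_1\le Mf_2.
\]
These say that each individual branch of the channel respects the Lorenz ordering of the \emph{input} pair $(\rho_1,\rho_2)$, and they are what will confine $q$ to the allowed window.

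The two bounds then follow by a short chain argument. For the upper bound I combine the feasibility relation $qM'e_2\le e_1$ with the outer bound $e_1\le Me_2$: when $e_2>0$ this gives $qM'\le M$, i.e.\ $q\le M/M'$. For the lower bound I combine the outer bound $mf_2\le f_1$ with the feasibility relation $f_1\le qm'f_2$: when $f_2>0$ (and $m'>0$) this gives $m\le qm'$, i.e.\ $q\ge m/m'$. Thus in the generic situation where the optimal $(E,F)$ activates both branches, the claim is immediate.

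The step I expect to be the main obstacle is the boundary behaviour, namely the degenerate cases $e_2=0$ or $f_2=0$ (and, relatedly, the tacit assumption that $m,M,m',M'$ are finite and strictly positive, i.e.\ that the four states share a common support). I would dispose of these by showing they collapse to a trivial situation. If $e_2=0$, then $e_1\le Me_2=0$ forces $e_1=0$, so $p_1>0$ requires $f_1>0$ and the map is effectively single-branch with $q=f_1/f_2$; substituting $f_2=f_1/q$ into $f_1\le qm'f_2$ gives $m'\ge1$, which together with the general inequalities $m'\le1\le M'$ forces $m'=M'=1$ (hence $\rho_1'=\rho_2'$), whereupon the outer bound $m\le q=f_1/f_2\le M$ is precisely $m/m'\le q\le M/M'$. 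The case $f_2=0$ is symmetric: it forces $M'=1$ and again reduces to the outer bound. Collecting the generic estimates together with these degenerate reductions establishes $m/m'\le q\le M/M'$ in all cases, which is the contrapositive-free form of the stated implication.
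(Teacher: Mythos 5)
Your proof is correct and rests on the same ingredients as the paper's: it combines the feasibility constraints $\Tr[E(\rho_1-qM'\rho_2)]\ge 0$ and $\Tr[F(qm'\rho_2-\rho_1)]\ge 0$ with the operator inequalities $m\rho_2\le\rho_1\le M\rho_2$, the only difference being that the paper argues by contrapositive (if $q$ leaves the interval, every feasible pair $(E,F)$ is forced to have $\Tr[(E+F)\rho_1]=0$) whereas you sandwich $e_1/e_2$ and $f_1/f_2$ directly. Your explicit handling of the degenerate branches $e_2=0$ and $f_2=0$ is in fact slightly more careful than the paper's, whose step ``$(1-M')\Tr[E\rho_1]\le 0$ since $M'>1$'' tacitly assumes $M'\neq 1$.
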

\begin{proof}
	Suppose $q<\frac{m}{m'}$. Then, $qm'<m$ so that $qm'\rho_2-\rho_1\leq 0$. Moreover,
	\begin{align*}
	&\Tr\left[F\left(qm'\rho_2-\rho_1\right)\right]\\
	&=
	\Tr\left[F\left(m\rho_2-\rho_1\right)\right]-(m-m'q)\tr[F\rho_2]\\
	& <0\;,
	\end{align*}
	unless $\tr[F\rho_1]=\tr[F\rho_2]=0$. We therefore must have $\tr[F\rho_1]=\tr[F\rho_2]=0$. This later condition
	gives
	\begin{align*}
	&\tr\left[(E+F)(q\rho_2-\rho_1)\right]=0\\
	&\iff \tr\left[E(q\rho_2-\rho_1)\right]=0\nonumber\\
	&\iff \tr\left[E\rho_1]=q\tr[E\rho_2\right]\;.
	\end{align*}
	But this last equality gives
	\be
	\begin{split}
	&\Tr\left[E\left(\rho_1-qM'\rho_2\right)\right]\\
	&=(1-M')\tr[E\rho_1]\\
	&\leq 0\;,
	\end{split}
	\ee
	since $M'>1$. We therefore must have $\tr[E\rho_1]=\tr[E\rho_2]=0$. Together with $\tr[F\rho_1]=\tr[F\rho_2]=0$, it gives $P_{\max}(q)=0$. Following similar lines we get $P_{\max}(q)=0$ for $q>M/M'$.
\end{proof}

As a consequence of the above discussion, we obtain the following corollary, which is consistent with a result in Ref.~\cite{reeb_hilberts_2011}, but slightly more general:

\begin{corollary}
	$$
	\max_{q\in\mbb{R}_{+}}P_{\max}(q)>0\;\;\iff\;\;\mathfrak{h}(\rho_1,\rho_2)\geq\mathfrak{h}(\rho_1',\rho_2')\;. 
	$$
\end{corollary}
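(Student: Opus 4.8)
The plan is to first translate the right-hand side into the language of the preceding lemmas. Recalling the identity $\inf(\rho/\sigma)=1/\sup(\sigma/\rho)$ noted earlier, the Hilbert projective metric factorizes as $\mathfrak{h}(\rho_1,\rho_2)=\ln[\sup(\rho_1/\rho_2)\sup(\rho_2/\rho_1)]=\ln(M/m)$, and likewise $\mathfrak{h}(\rho_1',\rho_2')=\ln(M'/m')$, where $M,m,M',m'$ are exactly the quantities of Theorem~\ref{theo:meas-and-prep}. Hence $\mathfrak{h}(\rho_1,\rho_2)\ge\mathfrak{h}(\rho_1',\rho_2')$ is equivalent to $M/m\ge M'/m'$, i.e. to $m/m'\le M/M'$: in other words, the stated inequality holds precisely when the interval $[m/m',M/M']$ featured in the preceding lemma is non-empty.

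With this reformulation the forward implication is immediate. If $\max_{q}P_{\max}(q)>0$, then some $q$ satisfies $P_{\max}(q)>0$, and the preceding lemma forces $m/m'\le q\le M/M'$; in particular the interval is non-empty, which is the claimed inequality.

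For the converse I would exhibit an explicit feasible point. Assume the interval is non-degenerate and pick $q$ in its interior. Since $qM'<M$ and $qm'>m$, the Hermitian operators $\rho_1-qM'\rho_2$ and $qm'\rho_2-\rho_1$ both have non-trivial positive parts; let $\Pi$ and $\Pi'$ be the corresponding positive-support projectors and set $E=a\Pi$, $F=b\Pi'$ with $a,b\ge0$. The two positivity constraints $\Tr[E(\rho_1-qM'\rho_2)]\ge0$ and $\Tr[F(qm'\rho_2-\rho_1)]\ge0$ then hold automatically. A short computation, using $M'\ge1\ge m'$, gives $\Tr[\Pi(q\rho_2-\rho_1)]<0$ while $\Tr[\Pi'(q\rho_2-\rho_1)]>0$; since these carry opposite signs, the scalars $a,b>0$ can be chosen so as to enforce the defining equality $\Tr[(E+F)(q\rho_2-\rho_1)]=0$, and rescaled so that $E+F\le\openone$. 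Because $\Tr[\Pi\rho_1]>qM'\Tr[\Pi\rho_2]\ge0$, one obtains $p_1=\Tr[(E+F)\rho_1]>0$, so $P_{\max}(q)>0$ and hence $\max_{q}P_{\max}(q)>0$.

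The main obstacle is the boundary case $\mathfrak{h}(\rho_1,\rho_2)=\mathfrak{h}(\rho_1',\rho_2')$, where the interval collapses to the single point $q_0=m/m'=M/M'$ and the positive parts above degenerate to zero. Here I would instead work on the extremal eigenspaces, taking $E$ and $F$ supported on the kernels of $M\rho_2-\rho_1$ and $\rho_1-m\rho_2$ (where the supremum and infimum are attained). The positivity constraints are then saturated rather than strict, but the same sign-balancing of $\Tr[(E+F)(q_0\rho_2-\rho_1)]$ goes through; the only point to verify is that these extremal vectors carry positive weight under $\rho_1$ and $\rho_2$, so that $p_1>0$. This is where a support argument—or, alternatively, a limiting argument in which $q\to q_0$ from the interior combined with semicontinuity of the defining semidefinite program—is needed to close the proof.
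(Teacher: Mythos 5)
Your proposal is correct in its overall logic, and the forward direction coincides with the paper's: the identity $\inf(\rho/\sigma)=1/\sup(\sigma/\rho)$ gives $\mathfrak{h}(\rho_1,\rho_2)=\ln(M/m)$, so the inequality between Hilbert metrics is exactly non-emptiness of the interval $[m/m',M/M']$, and the lemma $P_{\max}(q)>0\Rightarrow q\in[m/m',M/M']$ immediately yields ``$\Rightarrow$''. Where you genuinely diverge is the converse: the paper does not construct a feasible point at all, but simply presents the corollary as ``a consequence of the above discussion'' and leans on the cited result of Reeb, Kastoryano and Wolf for existence when $M'/m'\le M/m$. Your explicit construction $E=a\Pi$, $F=b\Pi'$ on the positive parts of $\rho_1-qM'\rho_2$ and $qm'\rho_2-\rho_1$ is sound for $q$ in the open interval: the sign computation $\Tr[\Pi(q\rho_2-\rho_1)]<0<\Tr[\Pi'(q\rho_2-\rho_1)]$ (using $m'\le 1\le M'$) is right, the balancing and rescaling are homogeneous so they preserve feasibility, and $\Tr[\Pi\rho_1]>0$ gives $p_1>0$. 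This actually supplies a step the paper leaves implicit, which is a net gain. For the degenerate case $m/m'=M/M'$, of your two suggested repairs the support argument is the one that closes cleanly: if the kernel of $M\rho_2-\rho_1$ carried no weight under $\rho_2$ (equivalently, lay in the common kernel of $\rho_1$ and $\rho_2$), then $t\rho_2-\rho_1$ would remain positive semidefinite for $t$ slightly below $M$, contradicting $M=\inf\{t:t\rho_2-\rho_1\ge0\}$; hence $\Tr[\Pi_M\rho_2]>0$ and likewise for the kernel of $\rho_1-m\rho_2$, so the sign-balancing and $p_1>0$ survive. The limiting argument is weaker as stated, since feasibility passes to the limit but $p_1$ could a priori vanish along the sequence, so you would still need a uniform lower bound. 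With the support argument spelled out, your proof is complete and, if anything, more self-contained than the paper's.
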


%The above theorem, whose proof is postponed in the Supplemental Material, can be interpreted as follows. Whenever a map as in~(\ref{eq:prob-test-prop}) exists satisfying~(\ref{eq:constraints}), we can construct another map $\mE_0(\rho)\defeq\Tr[(\openone-E-F)\rho]\rho_2'$, such that $\mE+\mE_0$ is trace-preserving. Then, the map $\mE$ describes the successful events, i.e., those for which the transformation~(\ref{eq:constraints}) is achieved exactly, while the map $\mE_0$ describes the failures, for which the output state is always $\rho_2'$ independently of the input. The point is that the experimenter \textit{knows} whether the protocol succeeded or not, in stark contrast with the case in which one deterministically reaches a mixture of $\rho_1'$ and $\rho_2'$, but is unable to undo the mixing afterwards.

\section{Discussion.}

In the present work we introduced quantum relative Lorenz curves and Hilbert $\alpha$-divergences, studied their properties, and applied them to the problem of characterizing necessary and sufficient conditions for the existence of a suitable transformation from an initial pair of states $(\rho_1,\rho_2)$ to a final one $(\rho_1',\rho_2')$. In particular, a strong equivalence has been proved in the case of coherent energy transitions with Gibbs-preserving maps, a paradigm that has immediate applications in quantum thermodynamics and the resource theory of athermality. Finally, we also considered the cases of test-and-prepare channels and probabilistic transformations, giving necessary and sufficient conditions for both.

\medskip\textit{Acknowledgments.}---The authors are grateful to Mark Girard for his help with the Figures~\ref{fig:region-class} and~\ref{fig:region-quantum}. F.B. acknowledges financial support from the JSPS KAKENHI, No. 26247016. G.G. acknowledges financial support from NSERC.

\appendix

%\bibliography{My_Library}

\end{document}